\documentclass[11pt]{article}
\usepackage{hyperref}
\usepackage{url}
\usepackage{subcaption}
\usepackage{natbib}
\usepackage{multirow}
\usepackage{rotating}
\usepackage{array}
\usepackage{multicol}
\usepackage{float}
\usepackage{wrapfig}
\usepackage{cite}
\usepackage{xr}
\usepackage{algorithm}
\usepackage{algorithmic}
\usepackage{longtable}
\usepackage{amsmath, amssymb, mathtools, amsthm, url,amsbsy,enumerate}
\usepackage{color}

\setlength{\oddsidemargin}{0.1in}
\setlength{\textwidth}{6.5in}
\setlength{\textheight}{9in}
\setlength{\topmargin}{-0.6in}

\newtheorem{theorem}{Theorem}[section]

\newtheorem{lemma}{Lemma}

\newtheorem{corollary}{Corollary}[section]

\newcommand{\ignore}[1]{}

\DeclarePairedDelimiter\floor{\lfloor}{\rfloor}
\DeclareMathOperator*{\argmin}{arg\min}

\begin{document}

\allowdisplaybreaks

\title{Estimating Partition-wise Models Using the Minimum Description Length Principle}
\title{Consistent Estimation for Partition-wise Models}
\title{Consistent Estimation for Partition-wise Regression and Classification Models}

\author{Rex C. Y. Cheung\footnote{Department of Statistics, University of California, Davis, One Shields Avenue, Davis, CA 95616, USA, emails: {\tt \{rccheung, aaue, tcmlee\}@ucdavis.edu}}
\and Alexander Aue$^*$
\and  Thomas C. M. Lee$^*$}
\date{\today}

\maketitle

\begin{abstract}
Partition-wise models offer a flexible approach for modeling complex and multidimensional data that are capable of producing interpretable results. They are based on partitioning the observed data into regions, each of which is modeled with a simple submodel.  The success of this approach highly depends on the quality of the partition, as too large a region could lead to a non-simple submodel, while too small a region could inflate estimation variance.  This paper proposes an automatic procedure for choosing the partition (i.e., the number of regions and the boundaries between regions) as well as the submodels for the regions.  It is shown that, under the assumption of the existence of a true partition, the proposed partition estimator is statistically consistent.  The methodology is demonstrated for both regression and classification problems.

\noindent
Keywords: 
binary particle swarm optimization,
change point detection, 
variable selection
\end{abstract}


\section{Introduction}

With the advent of complex data, partition-wise models have become one of the more important classes of data processing and statistical inference methods. They work by partitioning the data space into regions and assigning a simple model to each of  these regions. When comparing with fitting a single complicated non-linear model to the entire data space, partition-wise modeling comes with the advantage of creating simpler and potentially more sensible model interpretations. Depending on the partition specifiers and objective functions to optimize, different fitting methods have been proposed. One of the earliest methods is classification and regression tree (CART) \citep{Breiman-et-al84}, which recursively constructs a tree for the data space until all data within each region exhibits homogeneous behavior. More recent methods such as local supervised learning through space partitioning \citep{Wang-Saligrama} and cost-sensitive tree of classifiers \citep{Xu-et-al13} have been proposed to give further error analysis and improve model validation speed, respectively. Another method \citep{Oiwa-Fujimaki} partitions the data space into rectangular grids and estimates the specific linear models for each region individually. Lastly, \citet{Eto-Fujimaki-Morinaga-Tamano14} build a tree structure for the data space using a Bayesian approach to estimate the splitting locations. 

In terms of classification, support vector machines (SVMs) proposed by \citet{vl63} are one of the most widely used techniques in recent years.  A partition-wise variant known as locally linear support vector machines (LLSVM) is developed by \citet{torrladicky11}, which attempts to detect decision boundaries that are almost linear and classify data around that region using local linear classifiers.  Another variant is the local deep kernel learning of \citet{jgav13}, which is a tree-based classifier that has the goal of speeding up non-linear SVM prediction while maintaining accuracy. 

All of the method mentioned in the previous two paragraphs aim at improving the model fitness and the estimation algorithm to comprehend the complex and massive data structure.  However, many lack the theoretical justification on the statistical behavior of the individual partitioned regions and few theoretical results have been derived up to date. 
Among the results available in the literature, most are on recursion based regression. For example, \citet{gordonolshen78} and \citet{gordonolshen80} provide sufficient conditions on the estimator to be asymptotic Bayes risk efficient and $L^p$ consistent, respectively, and \citet{lugosinobel95} provide a similar result on histogram density estimation. More recently \citet{totheltinge11} consider data coming from complex sample designs. They propose a method that incorporates the information from a complex design when building regression trees, and establish sufficient conditions for asymptotic design $L^2$ consistency of these regression trees as estimator of the conditional mean of the population.

The goal of this paper is then to provide an automatic method for estimating the partition as well as the submodel for each partitioned region.  The method employs the minimum description length (MDL) principle \citep{Rissanen89,Rissanen07} to define the best fitting partition-wise model.  It is shown that, under some mild regularity conditions, the proposed MDL estimate converges to the true partition.  In other words, the proposed method consistently estimates the number of regions and the region boundaries.  To the authors' knowledge, this is the first theoretical result on the consistency of a partition estimate for partition-wise models.  It is an important result, as the quality of an estimated partition is crucial to the predictive power of an estimated partition-wise model.

The rest of this paper is organized as follow. Section~\ref{gen_inst} introduces the partition-wise models for the regression and classification settings. Section~\ref{MDLsection} presents the proposed method for estimating the partition-wise models. Section~\ref{theory} derives the theoretical results while Section~\ref{optimization} discusses the optimization techniques. Section~\ref{empiricalstudies} provides some empirical results and Section~\ref{conclusion} concludes.  Technical details can be found in the Appendix.

\section{Partition-wise Models for Regression and Classification}
\label{gen_inst}

This section defines the standard partition-wise linear models. 
For clarity, the simpler case in which only one predictor contains change points is presented first.

Assume that the observed data $(\textbf{x}^{\prime}_i, y_i), i = 1, \dots, n$, can be partitioned into $m + 1$ regions, and, for $l = 1, \dots, m+1$, the $l^{th}$ region can be modeled by a linear model with $s_l$ predictors. Note that $s_l$ can be different for different regions. Denote the total number of available predictors by $P$, so $\textbf{x}_i = (1, x_{i,1}, \dots x_{i,P})'$ and $s_l\leq P$ for all $l$.  For $r = 1, \dots, m$, let the $r^{th}$ change point be denoted by $k_r \in \mathbb{N}$, where $k_0 = 1$ and $k_{m+1} = n$; these change points define the boundaries at which adjacent regions meet.  The partition-wise linear model for the $l^{th}$ region is then
\begin{align}
\label{segreg}
y_i = \textbf{x}_i'\beta_l + \varepsilon_i, \quad k_{l-1} < x_{i, p} \leq k_l \quad \mbox{for some $p \in \{1, \dots P\}$},
\end{align}
where $\beta_l$ is the vector of coefficients for the $l^{th}$ region and $\varepsilon_i$'s are i.i.d.\ $N(0, \sigma^2)$. This partition-wise linear model assumes a natural ordering on the $p^{th}$ predictor, and the change points are determined with respect to this predictor. Thus model~(\ref{segreg}) is specified by the parameters $m$, $\mathcal{K} = \{k_1, \dots, k_m\}$, $\boldsymbol{\beta} = \{\beta_1', \dots, \beta_{m+1}'\}'$ and $\sigma^2$.  

Often times change points can occur at more than one predictor, thus one can extend (\ref{segreg}) in the following way. Denote by $\mathcal{B}$ (of size $B$) the set of predictors with change points; i.e. $\mathcal{B} \subseteq \mathcal{P}$, where $\mathcal{P}$ is the set of original predictors. For each $b \in \mathcal{B}$, denote by $k_{j_b,b}$ the $j_b^{th}$ change point for the $b^{th}$ predictor, and $l_b$ the number of change points for the $b^{th}$ predictor. Write $\mathcal{L} = \{l_b\colon b \in \mathcal{B}\}$ and $\mathcal{K}$ for the set of change locations. Similar to the definition of a region above, some particular subsets of $\mathcal{K}$ form partitions in the data space (denote by $\mathcal{R}_r$ for the $r^{th}$ region). Denote the set of these partitions by $\mathcal{R}$, and let $|\mathcal{R}| = R$. It is straightforward to see that $\bigcup_{r} \mathcal{R}_r$ is the smallest hypercube that covers the data space. Then, for each $r = 1, \dots, R$, the partition-wise linear model is
\begin{align}
\label{regionreg}
y_i = \textbf{x}'_i\beta_r + \varepsilon_i, \quad k_{j_b-1,b} < x_{i,b} \leq k_{j_b,b} \quad \mbox{for some $b \in \mathcal{B}$ and some $j_b \in \{1, \dots, l_b+1\}$}. 
\end{align}
Also, let $|\beta_r| = s_r$, which may vary across $r$. If $|\mathcal{B}| = 1$, then~(\ref{regionreg}) simplifies to~(\ref{segreg}). For the method to be proposed below, all parameters ($\mathcal{B}$, $\mathcal{L}$, $\mathcal{K}$, $\boldsymbol{\beta}$) are treated as unknown and have to be estimated.

It is important to point out that (\ref{regionreg}) is different from the tree structure of CART \citep{Breiman-et-al84}. The usual tree based structure partitions the data space recursively, where one splits an existing region into subregions without interfering with other existing regions. Model (\ref{regionreg}), however, forms a grid space in the data space. See Figure~\ref{fig:examplerealization} for a realization of this model.

\begin{figure}[ht]
\begin{center}
\includegraphics[scale = 0.2]{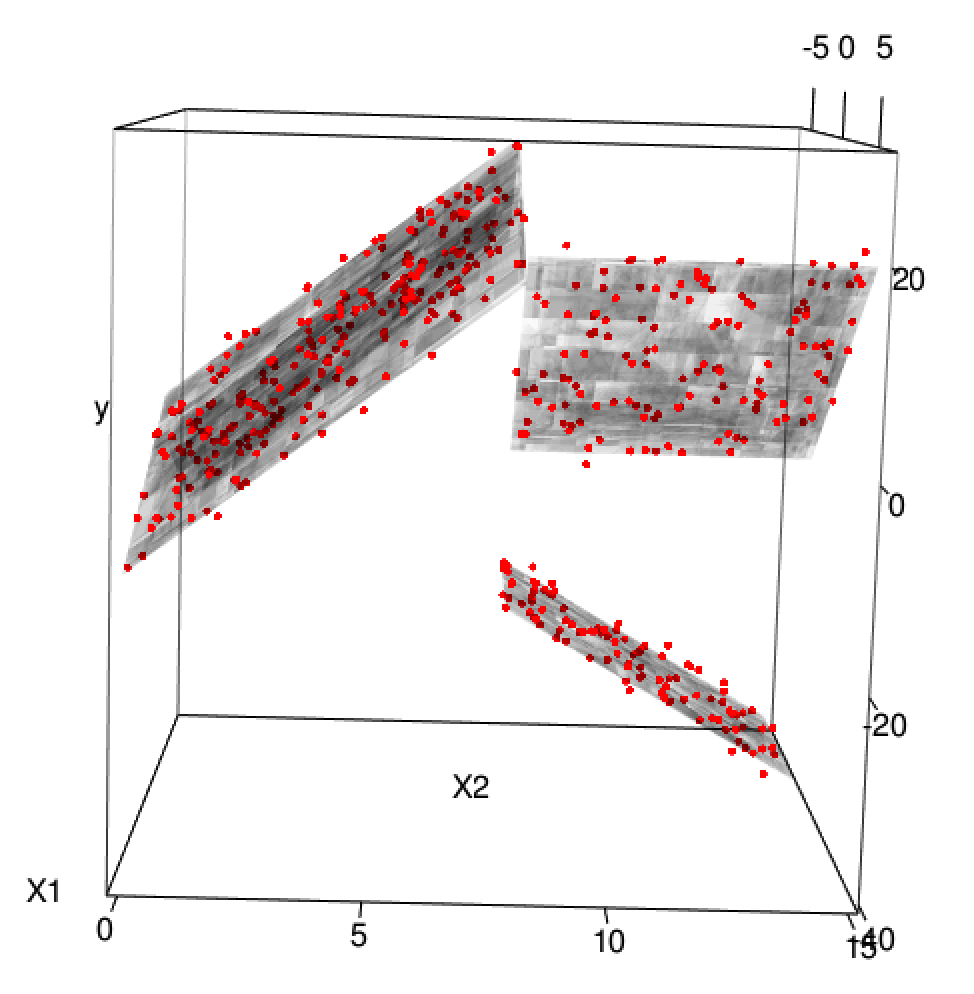}
\includegraphics[scale = 0.2]{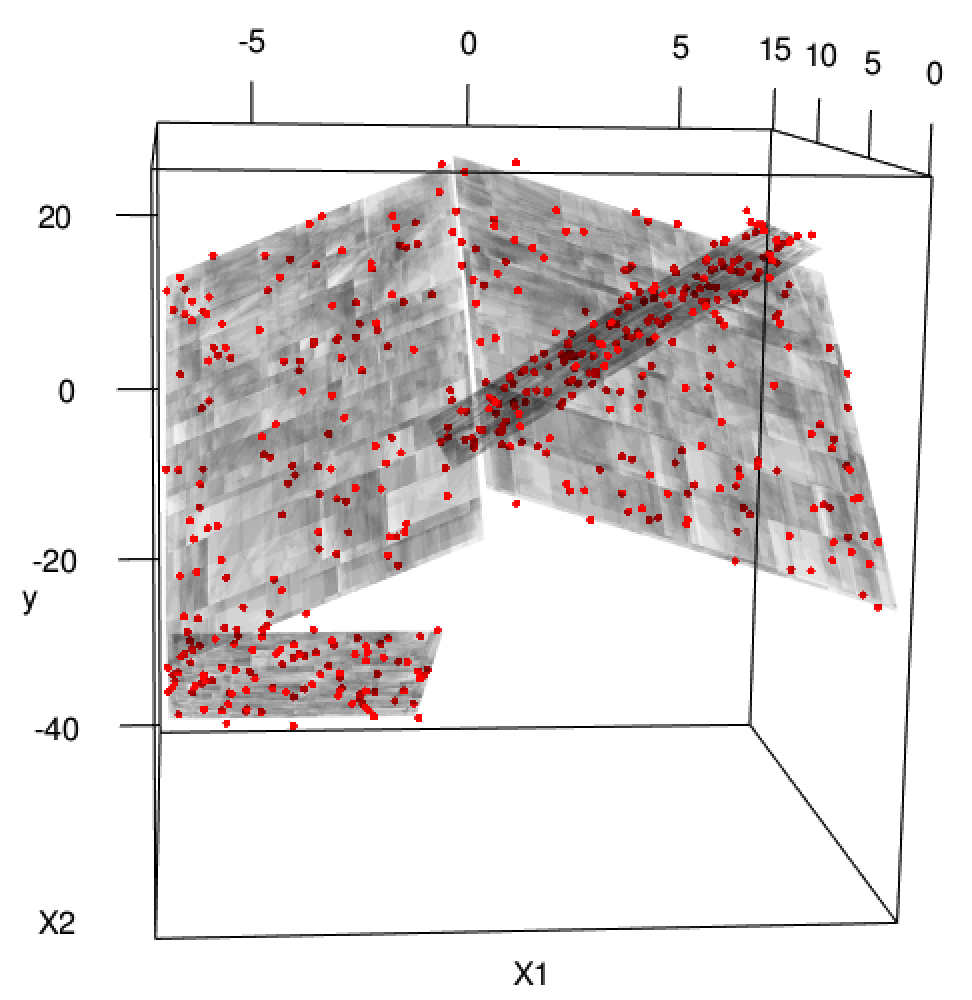}
\end{center}
\caption{A realization of model~(\protect\ref{regionreg}) viewed from two different angles. It contains two predictors, with change points at $-1$ for $x_1$ and 8 for $x_2$. The red dots are observations and the gray planes are the true signals.}
\vspace{-10pt}
\label{fig:examplerealization}
\end{figure}

Model~(\ref{regionreg}) can be extended to a more general setting using the generalized linear model
\begin{align}
\label{glm}
g(E[y_i|\textbf{x}_i]) = \textbf{x}_i'\beta_r, \quad k_{j_b-1,b} < x_{i,b} \leq k_{j_b,b},
\end{align}
where $g(\cdot)$ is a (known) link function. In particular, if one is interested in binary data, one could use
\begin{align}
\label{logisticmodel}
g^{-1}(\textbf{x}_i'\beta_r) = \frac{e^{\textbf{x}_i'\beta_r}}{1 + e^{\textbf{x}_i'\beta_r}}
\end{align}
or 
\begin{align}
\label{probitmodel}
g^{-1}(\textbf{x}_i'\beta_r) = \Phi(\textbf{x}_i'\beta_r) \quad \mbox{with $\Phi$ as the cdf of the standard normal distribution,}
\end{align}
which corresponds to the logistic and probit model, respectively.

\section{Change Point Detection and Variable Selection Using MDL}
\label{MDLsection}
\subsection{Linear Regression}
In standard linear regression theory, the estimate of $\beta$ can be obtained as
\begin{align*}
\hat{\beta} = \argmin \limits_{\beta \in \mathbb{R}^P} \sum_{i = 1}^n (y_i - \textbf{x}_i'\beta)^2.
\end{align*}
Under the normal assumption for $\varepsilon$, the least squares solution is the same as the maximum likelihood solution. Extension of this to the partition-wise linear model~(\ref{regionreg}) can be achieved as follows. As seen in~(\ref{regionreg}), the partition-wise linear model for regression is determined by the number of change points $l_b$ for each predictor in $\mathcal{B}$, the change locations $\mathcal{K} = (k_{1, 1}, \dots, k_{l_1, 1}, k_{1, 2}, \dots, k_{l_B, B})$, and the parameters for each region $\boldsymbol{\beta} = \{\beta_1, \dots, \beta_R\}$. If both $\mathcal{L}$ and $\mathcal{K}$ are known, then $\boldsymbol{\beta}$ can be estimated by solving
\begin{align*}
\boldsymbol{\hat{\beta}} = \argmin \limits_{\boldsymbol{\beta}} \sum_{\mathcal{R}_r \in \mathcal{R}} \sum_{i \in \mathcal{R}_r} (y_i - \textbf{x}_i'\beta_r)^2.
\end{align*}
Note that in practical minimization, it is assumed that the data is organized in a way that the data points of each region are grouped together.

The estimation of $\mathcal{L}$ and $\mathcal{K}$ is, however, less trivial, as different $\mathcal{L}$ will give different $\mathcal{K}$; i.e., the model dimensions are different. The rest of this section will apply the MDL principle to derive an estimate for $\mathcal{L}$ and $\mathcal{K}$.

The MDL principle is a model selection criterion. When applying the MDL principle, the ``best'' model is defined as the one that allows the greatest compression of the data $\textbf{y} = (y_1, \dots, y_n)$. That is, the ``best'' model enables us to store the data in a computer with the shortest code length. There are a few versions of MDL, and the ``two-part'' version will be used here.  The first part encodes the fitted model being considered, denoted by $\hat{\mathcal{F}}$, and the second part encodes the residuals left unexplained by the fitted model, denoted by $\hat{\mathcal{E}} = \hat{\textbf{y}} - \textbf{y}$, where $\hat{\textbf{y}}$ is the fitted value for $\textbf{y}$. Denote by $\text{CL}_{\mathcal{F}}(\textbf{y})$ the code length of $\textbf{y}$ under the model $\mathcal{F}$, then 
\begin{align}
\label{codelength}
\text{CL}_{\mathcal{F}}(\textbf{y}) = \text{CL}_{\mathcal{F}}(\hat{\mathcal{F}}) + \text{CL}_{\mathcal{F}}(\hat{\mathcal{E}}|\hat{\mathcal{F}}).
\end{align} 
The goal is to find the model $\hat{\mathcal{F}}$ that minimizes (\ref{codelength}). To use (\ref{codelength}), the two terms on the right need to be calculated. To fit the model, one should first determine which of the $P$ predictors contain change points, and the number of change points $l_b$ for each predictor $b$. Also, let $n_{j_b,b} = k_{j_b, b} - k_{j_b-1, b}$ be the number of observations between any two change points $j_b-1$ and $j_b$ for some predictor $b$. Since $\hat{\mathcal{F}}$ is completely characterized by $\mathcal{B}$, $\mathcal{L}, \mathcal{K}$ and $\boldsymbol{\beta}$, the code length of $\hat{\mathcal{F}}$ can be decomposed as
\begin{align}
\label{modelcode}
\text{CL}_{\mathcal{F}}(\hat{\mathcal{F}}) &= \text{CL}_{\mathcal{F}}(\mathcal{B}) + \text{CL}_{\mathcal{F}}(\mathcal{L}) + \text{CL}_{\mathcal{F}}(\mathcal{K}) + \text{CL}_{\mathcal{F}}(\boldsymbol{\beta}) \nonumber \\
&= \text{CL}_{\mathcal{F}}(\mathcal{B}) + \text{CL}_{\mathcal{F}}(l_1) + \dots + \text{CL}_{\mathcal{F}}(l_B) + \text{CL}_{\mathcal{F}}(n_{1,1}) + \dots \nonumber \\
& \quad+ \text{CL}_{\mathcal{F}}(n_{j_B+1, B}) + \text{CL}_{\mathcal{F}}(\beta_1) + \dots + \text{CL}_{\mathcal{F}}(\beta_R).
\end{align}
According to \citet{Rissanen89}, it requires approximately $\log_2I$ bits to encode an integer $I$ if the upper bound is unknown, and $\log_2I_u$ bits if $I$ is bounded from above by $I_u$. To encode $\mathcal{B}$, one needs to determine which of the $P$ predictors are selected. This takes $B\log_2P$ bits, as each of the $B$ predictors can be identified by an index upper bounded by $P$. For each set of $\{l_b, n_{1, b}, \dots, n_{l_b, b}\}$, ${b = 1, \dots, B}$, one needs to first decide which of the $B$ predictors are chosen (with code length $\log_2(B+1)$), then encode $l_b$ and $n_{1, b}, \dots, n_{l_b, b}$; therefore 
\begin{align*}
\text{CL}_{\mathcal{F}}(l_b) + \text{CL}_{\mathcal{F}}(n_{1,b}) + \dots + \text{CL}_{\mathcal{F}}(n_{l_b, b}) = \log_2(B + 1) + \log_2(l_b + 1) + \sum \limits_{z = 1}^{l_b + 1} \log_2n_{z, b}. 
\end{align*}
Note the $1$ is added in the first two $\log$ terms for computational purposes. Lastly, by \citet{Rissanen89}, it takes $\frac12\log_2N$ bits to encode a maximum likelihood estimate of a parameter computed from $N$ observations. To encode $\beta_r$, one needs to determine which region $\beta_r$ belongs to, which takes $\log_2R$ bits, and since $|\beta_r| = s_r$, 
\[\text{CL}_{\mathcal{F}}(\beta_r) = \log_2R + \frac{s_r}{2}\log_2n_r,\]
where $n_r$ is the number of observations in $\mathcal{R}_r$. Putting everything together, (\ref{modelcode}) becomes
\begin{align}
\label{modelcode2}
\text{CL}_{\mathcal{F}}(\hat{\mathcal{F}}) &= B\log_2P + \sum_{b \in \mathcal{B}}\bigg[\log_2(B + 1) + \log_2(l_b + 1) + \sum \limits_{z = 1}^{l_b + 1} \log_2n_{z, b}\bigg] + \sum_{\mathcal{R}_r \in \mathcal{R}}\Big(\log_2R + \frac{s_r}{2}\log_2n_r\Big). 
\end{align}

To obtain the second term of (\ref{codelength}), one can use the result of \citet{Rissanen89} that the code length of the residuals $\hat{\mathcal{E}}$ is the negative of the log-likelihood of the fitted model $\hat{\mathcal{F}}$. With the assumption $\varepsilon$ follows $N(0,\sigma^2)$,
\begin{align}
\label{errorcodelength}
\text{CL}_{\mathcal{F}}(\hat{\mathcal{E}}|\hat{\mathcal{F}}) = \frac{n}{2}\log(\hat{\sigma}^2) \qquad \mbox{with} \qquad
\hat{\sigma}^2 = \frac1n\sum\limits_{\mathcal{R}_r \in \mathcal{R}}\sum \limits_{i \in \mathcal{R}_r}(y_i - \textbf{x}_i'\hat{\beta}_r)^2. 
\end{align}
Combining (\ref{modelcode2}) and (\ref{errorcodelength}), the proposed MDL criterion for the best fitting partition-wise linear model for regression is
\begin{align}
\mbox{MDL}_{\rm reg}(\mathcal{B}, \mathcal{L}, \mathcal{K}, \boldsymbol{\beta}) 
=& B\log_2P + \sum_{b \in \mathcal{B}} \bigg[\log_2(B + 1) + \log_2(l_b + 1) + \sum \limits_{z = 1}^{l_b + 1} \log_2n_{z, b}\bigg] \nonumber \\
 &+ \sum_{\mathcal{R}_r \in \mathcal{R}}\Big[\log_2R + \frac{s_r}{2}\log_2n_r\Big] + \frac{n}{2}\log\Bigg(\frac1n\sum\limits_{\mathcal{R}_r \in \mathcal{R}}\sum \limits_{i \in \mathcal{R}_r}(y_i - \textbf{x}_i'\hat{\beta}_r)^2\Bigg).
\label{MDLregression}
\end{align}

\subsection{Logistic and Probit Regression}
The MDL criterion for logistic regression can be derived by replacing the last term in~(\ref{MDLregression}) with the negative log-likelihood of model~(\ref{glm})--(\ref{logisticmodel}), which gives   
\begin{align}
\label{MDLlogistic}
\mbox{MDL}_{\rm bin}(\mathcal{B}, \mathcal{L}, \mathcal{K}, \boldsymbol{\beta}) 
&= B\log_2P + \sum_{b \in \mathcal{B}}\Bigg[\log_2(B + 1) + \log_2(l_b + 1) + \sum \limits_{z = 1}^{l_b + 1} \log_2n_{z, b}\Bigg] \nonumber \\
& \hspace*{-1cm} + \sum_{\mathcal{R}_r \in \mathcal{R}}\Big[\log_2R + \frac{s_r}{2}\log_2n_r\Big] - \sum\limits_{\mathcal{R}_r \in \mathcal{R}}\sum \limits_{i \in \mathcal{R}_r}\Big[y_i\textbf{x}_i'\hat{\beta}_r - \log\big(1 + e^{\textbf{x}_i'\hat{\beta}_r}\big)\Big].
\end{align}
A similar expression can be derived for the probit regression.

\section{Large-Sample Theory}
\label{theory}
In order to assess the asymptotic behavior of the MDL criteria derived above, a few assumptions are needed on the underlying random process. First it is assumed that the observations $\{(\textbf{x}_i', y_i)\}_{i=1}^{n}$ follow model~(\ref{regionreg}), (\ref{logisticmodel}) or~(\ref{probitmodel}). Also, it is assumed that there exists a constant $M_0$ such that $|\textbf{x}_i|_{\infty} \leq M_0$ for all $i$. This assumption is needed to ensure the rate of growth of the predictors are restricted. Denote the true parameters with the superscript ``0''. Therefore the true number of change points for the $b^{th}$ predictor is written as $l_b^0$, and the corresponding change points are denoted as $\{k^0_{1,b}, \dots, k^0_{l_b^0, b}\}$. Denote the relative locations of the change points by $k^0_{j_b, b} = \floor{\lambda^0_{j_b,b}n}$, and $0 < \lambda^0_{1, b} < \dots < \lambda^0_{l_b^0,b} < 1$, for all $b \in \mathcal{B}$. To guarantee sufficient data for consistent parameter estimation within each region, it is assumed that there exists a $\delta > 0$ such that $\delta \ll \min \limits_{j_b} (\lambda^0_{j_b, b} - \lambda^0_{j_b-1,b})$. The collection of all potential change locations can then be written as 
\begin{align}
\label{breakcandidate}
\mathbf{\Lambda} &= \Big\{(\lambda_{1, b}, \dots, \lambda_{l_b,b}) \colon b \in \mathcal{B},\, 0 < \lambda_{1, b} < \dots < \lambda_{l_b^0,b} < 1,\, \delta \ll \min \limits_{j_b} (\lambda_{j_b, b} - \lambda_{j_b-1,b})\Big\}. 
\end{align}
Then, for the regression setting, the estimation of the parameters $\mathcal{B}^0$, $\mathcal{L}^0$, $\Lambda^0$ and $\boldsymbol{\beta}^0$ can be found by minimizing the MDL criterion:
\begin{align}
\label{regestimate}
(\hat{\mathcal{B}}, \hat{\mathcal{L}}, \hat{\Lambda}, \hat{\boldsymbol{\beta}}) = \argmin_{\mathcal{B}, \mathcal{L}, \Lambda,\boldsymbol{\beta} \in \mathcal{M}} \frac2n\mbox{MDL}_{\rm reg}(\mathcal{B}, \mathcal{L}, \Lambda, \boldsymbol{\beta}),
\end{align}
where $\mathcal{M} = \{(\mathcal{B}, \mathcal{L}, \Lambda, \boldsymbol{\beta})\colon \mathcal{B} \subseteq \mathcal{P}, \Lambda \in \mathbf{\Lambda}\}$. Similarly, one can set up a mirror version for the logistic/probit model setting by minimizing the MDL criterion:
\begin{align}
\label{logestimate}
(\hat{\mathcal{B}}, \hat{\mathcal{L}}, \hat{\Lambda}, \hat{\boldsymbol{\beta}}) = \argmin_{\mathcal{B}, \mathcal{L}, \Lambda,\boldsymbol{\beta} \in \mathcal{M}} \frac1n\mbox{MDL}_{\rm bin}(\mathcal{B}, \mathcal{L}, \Lambda, \boldsymbol{\beta}),
\end{align}
where $\mbox{MDL}_{\rm bin}(\cdot)$ is the MDL criterion~(\ref{MDLlogistic}) for binary data with logistic regression.  With this set up, the following theorems are proved in the Appendix.
\begin{theorem}
\label{breaklocthm}
Suppose the number of change points $\mathcal{L}^0$ is known. Then estimating the partition-wise linear model~(\ref{regionreg}) leads to 
\[
\hat{\Lambda} \to \Lambda^0 \quad \text{with probability one} \quad (n \to \infty),
\]
where $\hat{\Lambda}$ is the minimizer of the MDL criterion~(\ref{regestimate}). Similarly, estimating the partition-wise logistic model~(\ref{logisticmodel}) leads to
\[
\hat{\Lambda} \to \Lambda^0 \quad \text{with probability one} \quad (n \to \infty),
\]
where $\hat{\Lambda}$ is the minimizer of the MDL criterion~(\ref{logestimate}).  A parallel result holds for the probit model~(\ref{probitmodel}) with a MDL criterion designed for the probit link.
\end{theorem}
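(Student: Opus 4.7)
The strategy is to show that, after normalization by $1/n$, the MDL criterion converges almost surely to a deterministic limiting criterion that is uniquely minimized at $\Lambda^0$, and to upgrade this to uniform convergence on the compact parameter space $\mathbf{\Lambda}$ defined in~(\ref{breakcandidate}). Since $\mathcal{L}^0$ is known, the active set $\mathcal{B}^0 = \{b : l_b^0 > 0\}$ is also known, so only the scaled change locations $\Lambda$ and the profiled regression coefficients $\hat{\boldsymbol{\beta}}(\Lambda)$ are being searched over. I would treat the regression case first and then transfer the argument to the logistic/probit settings.

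For the regression case I would decompose $\frac{2}{n}\mbox{MDL}_{\rm reg}(\mathcal{B}^0, \mathcal{L}^0, \Lambda, \hat{\boldsymbol{\beta}}(\Lambda))$ into a model-complexity term and a profile objective $\log \hat{\sigma}^2(\Lambda)$ with $\hat{\sigma}^2(\Lambda) = n^{-1} \sum_r \sum_{i \in \mathcal{R}_r(\Lambda)}(y_i - \textbf{x}_i'\hat{\beta}_r(\Lambda))^2$. Using the $\delta$-separation in~(\ref{breakcandidate}) to bound each $n_{z,b}$ and $n_r$ from below by a constant multiple of $n$, the model-complexity term is $O(\log n / n)$ uniformly on $\mathbf{\Lambda}$ and therefore vanishes. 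Thus the asymptotic minimizer is driven entirely by $\log \hat{\sigma}^2(\Lambda)$.

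The heart of the proof is then to show that $\hat{\sigma}^2(\Lambda)$ converges uniformly on $\mathbf{\Lambda}$ to a continuous deterministic function $\sigma^2(\Lambda)$ that is uniquely minimized at $\Lambda^0$, with $\sigma^2(\Lambda^0) = \sigma^2$. For any candidate $\Lambda$, each fitted region $\mathcal{R}_r(\Lambda)$ decomposes into overlaps $\mathcal{R}_r(\Lambda) \cap \mathcal{R}_{r'}(\Lambda^0)$ with the true regions; applying the strong law of large numbers on each overlap (where observations follow a single true submodel with coefficient $\beta_{r'}^0$) shows that $\hat{\beta}_r(\Lambda)$ converges to a weighted least-squares projection $\beta_r^*(\Lambda)$ onto the mixture of true submodels present in $\mathcal{R}_r(\Lambda)$. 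When $\Lambda = \Lambda^0$, each fitted region contains observations from a single true submodel, giving $\sigma^2(\Lambda^0) = \sigma^2$; when $\Lambda \neq \Lambda^0$, at least one fitted region mixes two distinct $\beta_{r'}^0$'s, and the bounded-predictor hypothesis $|\textbf{x}_i|_\infty \leq M_0$ together with the non-degeneracy of the within-region design gives a strictly positive squared-bias contribution, hence $\sigma^2(\Lambda) > \sigma^2$. Monotonicity of $\log$ and a standard argmin-consistency argument then yield $\hat{\Lambda} \to \Lambda^0$ almost surely. The logistic/probit cases follow the same blueprint with $\log \hat{\sigma}^2(\Lambda)$ replaced by the scaled negative log-likelihood taken from~(\ref{MDLlogistic}); consistency of $\hat{\beta}_r(\Lambda)$ for the Kullback--Leibler projection onto the mixture of true submodels, combined with identifiability of the logistic/probit family, ensures that the deterministic limit is uniquely minimized at $\Lambda^0$.

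The main obstacle will be the passage from pointwise to uniform convergence on $\mathbf{\Lambda}$, since the components of $\Lambda$ are scaled versions of discrete indices but the parameter space is a continuum. I would handle this by covering $\mathbf{\Lambda}$ with $\varepsilon$-brackets in each coordinate $\lambda_{j_b,b}$ and exploiting the monotonicity of the partial sums defining the RSS (and the log-likelihood, respectively) in each coordinate to control the fluctuation between bracket endpoints. A further delicate point arises when $\Lambda$ lies near $\Lambda^0$ and some overlap region has vanishing but still $\Theta(n)$ mass: here pointwise consistency of $\hat{\beta}_r$ on the shrinking overlap is not needed, only continuity of $\sigma^2(\Lambda)$ at $\Lambda^0$, which follows from direct residual-sum-of-squares bounds using the bounded design.
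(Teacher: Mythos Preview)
Your proposal is correct and shares the paper's core insight: after dividing by $n$, the penalty terms vanish uniformly on $\mathbf{\Lambda}$ and the criterion is asymptotically $\log\hat\sigma^2(\Lambda)$ (respectively the scaled negative log-likelihood), whose almost-sure limit equals $\sigma^2$ at $\Lambda^0$ and $\sigma^2$ plus a strictly positive bias whenever some fitted region straddles two or more true regions with distinct coefficients. The paper isolates exactly this bias statement as a separate lemma (Lemma~\ref{breaksigma}) and then, instead of building the uniform-convergence and $\varepsilon$-bracketing machinery you outline, runs a short subsequence contradiction: assume $\hat\Lambda\not\to\Lambda^0$, extract by compactness a subsequence with limit $\Lambda^*\neq\Lambda^0$, apply the lemma to obtain $\lim_n\frac{2}{n}\mbox{MDL}(R^0,\mathcal{R}^*)>\lim_n\frac{2}{n}\mbox{MDL}(R^0,\mathcal{R}^0)$, and contradict the minimality of $\hat\Lambda$. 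This sidesteps your bracketing step entirely, at the price of being less explicit about the continuity needed to identify $\lim_n\frac{2}{n}\mbox{MDL}(R^0,\hat{\mathcal{R}})$ with the limit evaluated at $\mathcal{R}^*$. Your route is more self-contained and makes the uniformity requirements transparent; the paper's is shorter but leans on implicit regularity of the profiled objective along the subsequence.
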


The above assumption that $\mathcal{L}^0$ is known is generally not practical. However, the theoretical result for the case of unknown $\mathcal{L}^0$ is much harder to work with, and is usually restricted to some special cases on the $\varepsilon_i$'s, such as the normal assumption.
\begin{theorem}
\label{breakconsistthm}
Suppose $\{(\textbf{x}_i',y_i)\}_{i=1}^{n}$ follow the partition-wise linear model~(\ref{regionreg}) and $\{\varepsilon_i\}_{i=1}^{n}$ are normally distributed. Furthermore, assume the set of predictors with change points $\mathcal{B}^0$ is known. Then the minimizers $(\hat{\mathcal{L}}, \hat{\Lambda})$ of the MDL criterion~(\ref{regestimate}) satisfy
\[
\hat{\mathcal{L}} \to \mathcal{L}^0 \quad \text{with probability one} \quad (n \to \infty),
\]
and
\[
\hat{\Lambda} \to \Lambda^0 \quad \text{with probability one} \quad (n \to \infty).
\]
\end{theorem}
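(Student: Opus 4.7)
The plan is to reduce Theorem~\ref{breakconsistthm} to Theorem~\ref{breaklocthm} by first establishing $\hat{\mathcal L}=\mathcal L^0$ eventually almost surely, after which the location statement follows immediately from Theorem~\ref{breaklocthm} applied conditionally on the correct number of change points. Because the spacing constraint in~(\ref{breakcandidate}) caps the admissible number of change points per predictor at roughly $\lfloor 1/\delta\rfloor$, there are only finitely many candidate values of $\mathcal L$, so it suffices to show that for every $\mathcal L\neq\mathcal L^0$,
$$
\inf_{\Lambda,\boldsymbol\beta}\,\mbox{MDL}_{\rm reg}(\mathcal B^0,\mathcal L,\Lambda,\boldsymbol\beta) \;>\; \mbox{MDL}_{\rm reg}(\mathcal B^0,\mathcal L^0,\Lambda^0,\boldsymbol\beta^0)
$$
almost surely for all large $n$. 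I would split the analysis into an underfitting case ($l_b<l_b^0$ for some $b$) and an overfitting case ($l_b\geq l_b^0$ everywhere with strict inequality somewhere), treated by quite different arguments.

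\textbf{Underfitting.} If some $l_b<l_b^0$, then every admissible $\Lambda$ with this $\mathcal L$ merges at least two adjacent true regions whose coefficients differ, so the OLS fit within the merged region is asymptotically biased. Using $|\textbf{x}_i|_\infty\leq M_0$ together with the $\delta$-separation to guarantee $\Omega(n)$ observations in each region, a standard regression law of large numbers gives $\hat\sigma^2\to\sigma_*^2>\sigma^2$ almost surely, uniformly over the admissible $\Lambda$. Since the penalty pieces of~(\ref{MDLregression}) are $O(\log n)$ while the likelihood piece is $\tfrac{n}{2}\log\hat\sigma^2$, dividing by $n$ yields
$$
\frac{2}{n}\Bigl[\mbox{MDL}_{\rm reg}(\text{under}) - \mbox{MDL}_{\rm reg}(\text{truth})\Bigr]\;\longrightarrow\;\log(\sigma_*^2/\sigma^2)\;>\;0\quad\text{a.s.},
$$
which rules out underfitting.

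\textbf{Overfitting.} In this regime $\hat\sigma^2$ is still consistent for $\sigma^2$, so the comparison is delicate. Writing the overfitted residual sum of squares as $n\hat\sigma^2_{\text{over}} = n\hat\sigma^2_{\text{true}} - \Delta_n(\Lambda)$ and Taylor-expanding gives
$$
\tfrac{n}{2}\log\hat\sigma^2_{\text{over}} - \tfrac{n}{2}\log\hat\sigma^2_{\text{true}} \;=\; -\frac{\Delta_n(\Lambda)}{2\sigma^2} + o_p(1).
$$
After projecting out the true regression, $\Delta_n(\Lambda)/\sigma^2$ is a sum of chi-square-type quadratic forms of the Gaussian noise indexed by the spurious breakpoints. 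Yao-(1988)-type maximal inequalities over the finite breakpoint grid allowed by~(\ref{breakcandidate}) deliver a uniform bound $\sup_\Lambda \Delta_n(\Lambda) = O_p(\log n)$ per spurious change point. Each additional change point and region, however, contributes at least a $\tfrac{s_r}{2}\log_2 n_r$ term and a $\log_2 n_{z,b}$ term to the penalty, whose combined growth of order $(\log 2)^{-1}\log n$ strictly dominates the leading constant in the bound on $\Delta_n/\sigma^2$. Hence the net MDL change under overfitting is positive almost surely for large $n$.

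\textbf{Main obstacle.} The hard part is the uniform control of $\Delta_n(\Lambda)$ across all admissible overfitted $\Lambda$ simultaneously, with a constant tight enough to be beaten by the MDL penalty constant; this is precisely where the Gaussian tail of $\varepsilon_i$ is essential and explains the normality hypothesis. Careful chaining over the finite $\delta$-separated grid, together with Hanson--Wright / chi-square maximal inequalities, is needed to push the supremum bound below the penalty. Once this uniform bound is in hand, a union over the finitely many $\mathcal L\neq\mathcal L^0$ gives $\hat{\mathcal L}\to\mathcal L^0$ almost surely, and Theorem~\ref{breaklocthm} then delivers $\hat\Lambda\to\Lambda^0$ almost surely, completing the proof.
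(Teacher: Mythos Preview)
Your proposal is correct and follows essentially the same route as the paper. The paper organizes the argument via five lemmas: Lemma~\ref{morebreak} rules out underfitting by the same bias argument you sketch; Lemma~\ref{split} is a localization step (estimated breaks lie within $[\log n]^2$ of true ones) that feeds into Lemma~\ref{bound}, which gives the Yao-(1988)-type uniform bound $RSS-RSS(\hat{\mathcal R})<\sigma^2\log n\{\epsilon+2[\prod_b(\hat l_b-l_b^0)-1](1+\epsilon)\}$; Lemma~\ref{mdls} shows the penalty increment is nonnegative, and Lemma~\ref{order} combines these exactly as in your Taylor expansion to conclude overfitting is rejected. Your identification of the uniform-in-$\Lambda$ control of $\Delta_n$ as the crux, and of Gaussianity as the ingredient enabling it, matches the paper precisely; the only cosmetic difference is that the paper cites Yao's lemmas directly rather than phrasing things via Hanson--Wright or chaining.
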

With Theorem~\ref{breakconsistthm}, a similar result for probit model can be established:
\begin{corollary}
Suppose $\{(\textbf{x}_i',y_i)\}_{i=1}^{n}$ follow the probit model~(\ref{probitmodel}) and $(\hat{\mathcal{L}}, \hat{\Lambda})$ are the minimizers of an MDL criterion designed for the probit link. Then a similar result of Theorem~\ref{breakconsistthm} holds.
\end{corollary}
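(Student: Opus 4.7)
The plan is to mirror the proof of Theorem~\ref{breakconsistthm}, replacing the Gaussian objective $\frac{n}{2}\log(\hat\sigma^2)$ in~(\ref{MDLregression}) by the negative probit log-likelihood
\[
\ell^{\rm pr}_{\mathcal{R}}(\boldsymbol{\beta}) = -\sum_{\mathcal{R}_r\in\mathcal{R}}\sum_{i\in\mathcal{R}_r}\Bigl[y_i\log\Phi(\textbf{x}_i'\beta_r)+(1-y_i)\log\bigl(1-\Phi(\textbf{x}_i'\beta_r)\bigr)\Bigr],
\]
while keeping all model-complexity terms unchanged. Call the resulting criterion $\mbox{MDL}_{\rm pr}(\mathcal{B},\mathcal{L},\Lambda,\boldsymbol\beta)$. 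As a preliminary step I would establish within-region consistency of the MLE: any candidate region contains $\Theta(n)$ observations by the separation constant $\delta$ in~(\ref{breakcandidate}), so strict concavity of the probit log-likelihood together with the uniform bound $|\textbf{x}_i|_\infty\le M_0$ yields $\hat\beta_r\to\beta_r^0$ almost surely, in direct analogue with the OLS consistency used inside the proof of Theorem~\ref{breakconsistthm}.

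Next I would split the comparison of $n^{-1}\mbox{MDL}_{\rm pr}$ at a candidate $(\mathcal{L},\Lambda)$ with the same quantity at $(\mathcal{L}^0,\Lambda^0)$ into the three standard cases. Case (i): $\mathcal{L}=\mathcal{L}^0$ but $\Lambda\neq\Lambda^0$; this is handled by invoking the probit analogue of Theorem~\ref{breaklocthm}, which gives $\hat\Lambda\to\Lambda^0$ a.s.\ once the correct number of change points is fixed. Case (ii): at least one coordinate of $\mathcal{L}$ is strictly smaller than the corresponding coordinate of $\mathcal{L}^0$ (under-estimation); identifiability of the probit model plus the assumed separation of neighboring true coefficient vectors $\beta_r^0$ force the probit log-likelihood deficit to be $\Omega(n)$, which dominates the $O(\log n)$ saving in complexity. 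Case (iii): every coordinate of $\mathcal{L}$ is at least as large as $\mathcal{L}^0$ with strict inequality somewhere (over-estimation); here the extra code-length terms $\log_2(l_b+1)$, $\log_2 n_{z,b}$ and $\frac{s_r}{2}\log_2 n_r$ in~(\ref{modelcode2}) must beat the gain in probit log-likelihood produced by spurious change points.

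The main obstacle I expect is Case (iii). In the Gaussian proof one can exploit an exact chi-square representation for the likelihood gain produced by a superfluous change point; no such distributional identity is available for the probit likelihood. To overcome this I would carry out a uniform empirical-process argument over the compact, $\delta$-separated candidate set $\mathbf{\Lambda}$ defined in~(\ref{breakcandidate}). Specifically, Taylor-expanding $\ell^{\rm pr}_{\mathcal{R}}(\boldsymbol{\beta})$ around $\boldsymbol{\beta}^0$ up to second order and using boundedness of $\textbf{x}_i$ together with the fact that $\log\Phi$ and $\log(1-\Phi)$ are smooth with bounded derivatives on the compact image of $\textbf{x}_i'\beta$, Hoeffding (or Bernstein) concentration yields a supremum bound of order $O_p(\log n)$ on the likelihood improvement from any extra change point. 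This is dominated by the logarithmic complexity penalty attached to every additional region, giving Case (iii). Combining the three cases proves $\hat{\mathcal{L}}\to\mathcal{L}^0$ almost surely, and a final appeal to the probit analogue of Theorem~\ref{breaklocthm} then delivers $\hat\Lambda\to\Lambda^0$ almost surely, completing the proof.
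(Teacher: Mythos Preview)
Your proposal is correct in outline but takes a substantially different route from the paper. The paper's proof is a two-line reduction: it observes that the probit model admits the latent-variable representation $y_i^{*}=\textbf{x}_i'\beta+\varepsilon_i$ with $\varepsilon_i\sim N(0,1)$ and $y_i=\mathbb{1}\{y_i^{*}>0\}$, and then asserts that because the underlying noise is Gaussian, the machinery of Theorem~\ref{breakconsistthm} (Lemmas~\ref{morebreak}--\ref{order}, which hinge on the normal tail behaviour noted just below the statement of the corollary) carries over. In other words, the paper does not rerun Cases~(i)--(iii) for the probit likelihood at all; it simply invokes the Gaussian RSS analysis through the latent model.

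By contrast, you work directly with the probit log-likelihood and rebuild the three-case argument from scratch, replacing the exact chi-square identity used in the Gaussian Case~(iii) by a second-order Taylor expansion plus a uniform concentration bound over the $\delta$-separated candidate set $\mathbf{\Lambda}$. This is considerably more work, but it is also more explicit about the crux of the over-fitting case (the uniform $O_p(\log n)$ bound on the likelihood gain from spurious splits), and it would in principle extend to other links with sufficiently well-behaved score functions. The paper's shortcut buys brevity and a direct appeal to the already-proved Gaussian lemmas, but it is terse about how the RSS-based bounds of Lemmas~\ref{bound}--\ref{order} actually transfer when the latent $y_i^{*}$ are unobserved and the fitted criterion is the probit likelihood rather than a residual sum of squares; your approach would make that transfer explicit at the cost of a full technical development the paper sidesteps.
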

An inspection of the proof shows that the consistency result depends highly on the tail behavior of the likelihood function, one that the logistic model does not possess.  Theoretical development for the proof of this case is beyond the scope of this paper, although simulations suggest that the proposed MDL criterion for logistic regression correctly identifies $\mathcal{B}$ and $\mathcal{L}$ in the finite sample scenarios that were tested.

\section{Practical Minimization of the MDL Criterion}
\label{optimization}
The solution to the optimization problem of~(\ref{regestimate}) or~(\ref{logestimate}) involves a mix of discrete and continuous variables, where one needs to first determine the change points and their locations, then do variable selection and parameter estimation within each region.  
This section develops a fitting algorithm that combines univariate change point detection and binary particle swarm optimization (BPSO) to tackle this problem.

\subsection{Change Point Detection}
The first step of the algorithm is to find a candidate set of change points for each predictor in the data. Denote this set as $\mathcal{K}_{\rm sup}$, and its size by $K_{\rm sup}$. Note that ideally $\mathcal{K} \subseteq \mathcal{K}_{\rm sup}$.  
To locate the candidate set, one can use the proposed MDL criterion ((\ref{MDLregression}) or~(\ref{MDLlogistic})) with the following procedure.
For the first predictor, one first locates the change location that minimizes the MDL criterion. Conditioning on this first existing change location, one can locate a second change location similarly. Note that this assumes change points only occur at one predictor. To guarantee there is sufficient data for parameter estimation within each region, a minimum span constraint is imposed between any two change locations. Repeat this procedure for all predictors until a maximum number of change locations is reached, or until any of the candidate locations violate the minimum span constraint. Once the set $\mathcal{K}_{\rm sup}$ is determined, BPSO will be used to refine the set $\mathcal{K}$ that minimizes~(\ref{MDLregression}) or~(\ref{MDLlogistic}).

\subsection{Selection of Change Points via BPSO}
Particle swarm optimization (PSO) is a metaheuristic that aims to optimize a problem by iteratively improving a set of candidate solutions (swarm) with respect to a criterion function (here, the MDL criterion). The algorithm was first developed by \citet{KE95}, inspired by the social behavior of bird flocking or fish schooling. The algorithm shares some similarities with most evolutionary optimization algorithms such as genetic algorithms. In each, the algorithm initializes an initial population of candidate solutions and searches for the optimal solution by updating generations. 

PSO first initializes a set of candidate solutions. At each generation, each individual solution (particle) is updated according to a velocity value that suggests the current solution to converge towards the current global optimal solution. Each particle then keeps track of the best solution it has achieved so far (called $pbest$), as well as the best solution out of all the generated solutions (called $gbest$).  The algorithm iterates through these steps and stops when a certain stopping criterion is met. 

The original PSO was designed to solve optimization problems in continuous space. For discrete solutions (in particular, binary solutions), one needs to use a modified version, called BPSO \citep{KE97,Shen04}.  Roughly speaking, the BPSO proceeds in the same manner as PSO, but the solutions are now restricted to take values of 0 and 1, while the velocities are restricted to values between 0 and 1 through a sigmoid transformation. To apply BPSO to select the number of change points and change locations, $N$ particles are generated to encode the locations of candidate change points, as followed: each particle $X_i, i = 1, \dots, N$, can be expressed as a matrix of dimension $P$ by $n$, with element values 
\[X_{i,jk} = \left \{ \begin{array}{rl}
1, & \text{if the }k^{th}\text{ value of variable }j\text{ is a break candidate},\\
0, & \text{otherwise}.
\end{array}\right.
\]
In practice, one needs to ensure there are sufficiently many data points within each induced partition for sensible parameters estimation. For the current implementation, the particle should give partitions that contain at least $P$ observations in all induced partitions. Regenerate the particle if this constraint is not met.

\subsubsection{Initial Population}
For the current problem, the swarm initialization is done in the following way. The first particle ($X_1$) encodes all the break candidates in the set $\mathcal{K}_{\rm sup}$. The remaining particles are split into two sets, each with a different generating mechanism: each particle in the first half encodes a random subset of $\mathcal{K}_{\rm sup}$, while each particle in the second half encodes a random subset of $\mathcal{K}_{\rm sup}$ plus a random adjustment on the selected change locations. From experiments, $\mathcal{K}_{\rm sup}$ usually captures all the correct change points with locations very close to the true locations. The suggested adjustments are used to increase the solution search space and refine the estimated solutions. Once all the particles are generated, evaluate all particles (including variable selection described in Section~\ref{featselect}). Set $pbest$ of each particle as itself, and $gbest$ the particle which gives the smallest MDL value.

Initial velocities are also generated at this step. In BPSO, velocity can be interpreted as the probability that an element within the particle will take the value 1. The set of velocities has the same dimension as the swarm; i.e., has $N$ elements, each with dimension $P$ by $n$. All velocities are initialized to be 0.

\subsubsection{Update}
At each iteration $t$, the first step is to update the velocity. For each particle $i$, each element of velocity $v_i$ is updated via the formula 
\begin{eqnarray}
\label{velocity}
v_{i,jk}^{t} = |\omega v_{i,jk}^{t-1} + c_1 \times r_1 \times (pbest_{i,jk} - X_{i,jk}^{t-1}) + c_2 \times r_2 \times (gbest_{i,jk} - X_{i,jk}^{t-1})|,
\end{eqnarray}
where $\omega, c_1$, and $c_2$ are tuning parameters, and $r_1, r_2 \sim U(0,1)$. Using recommended settings, $\omega, c_1$, and $c_2$ are set to 1, 2, and 2 respectively. Note there is no guarantee that this updated velocity will lie between 0 and 1, thus a sigmoid transformation is applied, and the final velocity is 
\begin{eqnarray}
\label{vsigmoid}
v_{i,jk}^{t} = \frac{1}{1+e^{-v_{i,jk}^{t}}}.
\end{eqnarray}
Once the velocities are updated, the particles can be updated using the rule suggested in \citet{Shen04}:
\begin{equation}
\label{particle}
X_{i,jk}^{t} = \left \{ \begin{array}{rl}
X_{i,jk}^{t-1}, & \text{if  }v_{i,jk}^{t} \leq a,\\
pbest_{i,jk}, & \text{if } a < v_{i,jk}^{t} \leq \frac{1}{2}(1+a),\\
gbest_{jk}, & \text{if } \frac{1}{2}(1+a) < v_{i,jk}^{t} \leq 1\end{array}\right.
\end{equation}
for some $a \in (0,1)$. The current implementation uses $a = 0.5$. Lastly, $pbest_i, i = 1, \dots, N$, and $gbest$ are updated by comparing the MDL values of the new particles with the MDL values of the old particles. For each particle $i$, $pbest_i = X_i^{t}$ if $\mbox{MDL}(X_i^{t}) < \mbox{MDL}(X_i^{t-1})$, and $gbest = \operatornamewithlimits{arg\,min}_{pbest_i} \mbox{MDL}(pbest_i)$. 

\subsubsection{Mutation}
To expand the search space and help achieving the global optimum faster, ``mutation'' is also employed in the algorithm. In genetics, a mutation is a permanent change to a certain region of a gene. 
In the proposed BPSO, mutation is conducted in the following way: first select the best 10\% of particles (i.e., models with the smallest MDL values). For each selected particle, with equal probability, either (i) modify the number of change points, (ii) adjust the locations of existing change points, or (iii) do both. Finally replace the worst 10\% of particles with the mutated particles. All updates and mutations should satisfy the minimal size constraint. 

\subsubsection{Convergence}
At the end of each iteration, compare the MDL value of the current $gbest$ with the MDL value of the previous $gbest$. The BPSO algorithm terminates if this value is unchanged for five consecutive iterations.

\subsection{Feature Selection}
\label{featselect}
At this stage $\mathcal{B}$, $\mathcal{L}$ and $\mathcal{K}$ are known, and the only parameter left unknown is $\boldsymbol{\beta}$, the linear model coefficients for each region. If $|\mathcal{P}| = P$ and there are $R$ regions, the number of combinations of possible models within all region is $(2^{P+1})^R$ ($P+1$ to include the possibility of an intercept). Heuristically, one can loop over each of these combinations, along with the estimated $\mathcal{B}$, $\mathcal{L}$ and $\mathcal{K}$, and find the combination that results in the smallest MDL value. However, the number of combinations grows exponentially in $P$ and $R$, and soon the calculation will be intractable. Iterative updating methods will be adopted to solve this problem. 

Initially all regions will be assigned to have the full set of $\mathcal{P}$; i.e., $\beta_r = \{\beta_{r,0}, \beta_{r,1}, \dots, \beta_{r,P}\}'$, where $\beta_{r,0}$ is the intercept. At each step, fix all regions but one, say the $r^{th}$ region, and find the model that gives the smallest MDL along with all the other estimated parameter values. Apply this procedure to the remaining regions, and restart again until the models within each region remain unchanged across two large iterations.

\subsection{Final Adjustment}
Since the MDL criterion is non-convex, there is no guarantee that the solution from BPSO is a global minimum solution. A final adjustment is then applied to the final solution from BPSO in the following way. Locate the change points in the final solution from BPSO. For each subset of change points, calculate the corresponding MDL value, as well as the MDL values of some small adjustments of the change points (i.e., adjust the change locations slightly around the located change points). For example, if the data contains four predictors and BPSO locates one change point at the first and third predictor, then there will be four possible subsets of change points. Return the model with the smallest MDL value, and this will be the final solution. Algorithm~\ref{alg:algo} summarizes the above procedure.

\begin{algorithm}
\caption{Minimization of MDL}
\label{alg:algo}
\begin{algorithmic}[1]
\STATE Apply MDL to each predictor to locate a set of candidate change points.
\STATE Initialize particles and velocities $X_i, v_i, i = 1, \dots N$. Set $pbest_i = X_i$ and $gbest = \operatornamewithlimits{arg\,min}_{X_i} \mbox{MDL}(X_i)$. 
\WHILE{BPSO convergence criterion is not met}
    \FOR{$i = 1 \dots N$}
        \STATE Update velocity $v_i$ using (\ref{velocity}) and (\ref{vsigmoid}).
        \STATE Update particle $X_i$ using (\ref{particle}).
        \STATE Conduct variable selection and calculate MDL value.
        \IF{MDL($pbest_i$) $>$ MDL($X_i$)}
                \STATE $pbest_i = X_i$
        \ENDIF
    \ENDFOR
    \STATE Replace the worst 10\% particles with the mutation of the best 10\% particles.
    \STATE Update $gbest = \operatornamewithlimits{arg\,min}_{pbest_i} \mbox{MDL}(pbest_i)$. 
\ENDWHILE
\STATE Apply final adjustment to $gbest$. 
\end{algorithmic}
\end{algorithm}

\section{Empirical Performance}
\label{empiricalstudies}
To evaluate the empirical performance of the proposed methodology, two sets of numerical experiments were conducted. Applications of the proposed methodology to two real data sets were also performed for comparison with a number of existing methods. 

\subsection{Simulation Study: Linear Regression}
\label{regressionsimulation}
The following consists of two examples of partition-wise linear models with regression specifiers, each with two different noise levels. Both examples consist of four predictors, all generated from uniform distributions specified in Table~\ref{regexdist}.
For each setting, two noise levels were used: $\varepsilon \sim N(0,1)$ and $\varepsilon \sim N(0,16)$.  Change points exist at $x_1$ and $x_3$ with $l_1 = l_3 = 1$, and at $x_1$ and $x_4$ with $l_1 = l_4 = 1$, respectively, for the two settings.  The true $\boldsymbol{\beta}$ coefficients are shown in Table~\ref{regexdistbreaks}. All simulations were conducted with 500 trials of $n=200$ and $n=400$ data points for each trial. 

\begin{table}[h]
\caption{Variable distribution for simulations in Section~\ref{regressionsimulation}; $*$ indicates variable with change points, $\dagger$ indicates significant variable in at least one region.}
\label{regexdist}
\vspace*{-0.5cm}
\begin{center}
\begin{tabular}{ccccc}
\hline
& $x_1$ & $x_2$ & $x_3$ & $x_4$ \\
\hline
Setting 1 & $U(0, 7)^{*\dagger}$ & $U(-5,-1)^{\dagger}$ & $U(5, 12)^{*\dagger}$ & $U(-10, -4)^{\dagger}$ \\
Setting 2 & $U(4, 8)^*$ & $U(-5,0)^{\dagger}$ & $U(-9, -3)^{\dagger}$ & $U(0, 3)^*$ 
\end{tabular}
\end{center}
\end{table}

\begin{table}[!htb]
\small
\captionsetup[subtable]{labelformat=empty}
\caption{Change points and coefficients of $\beta$ for simulations in Section~\ref{regressionsimulation}.}
\label{regexdistbreaks}
\vspace*{-0.3cm}
\begin{subtable}{.55\linewidth}
\centering
\caption{Setting 1:}
\begin{tabular}{|c|c|c|c|c|c|}
\hline 
Region & Change Points & $\beta_1$ & $\beta_2$ & $\beta_3$ & $\beta_4$\\ 
\hline 
1 & $x_1 \leq 4, x_3 \leq 8.5$  & 2 & $-2$ & $-4$ & 1 \\ 
\hline 
2 & $x_1 > 4, x_3 \leq 8.5$ & 1.5 & $1$ & 3.5 & $-2$\\ 
\hline 
3 & $x_1 \leq 4, x_3 > 8.5$ & $-1.5$ & $-4.3$ & $-1.7$ & $-2.6$\\ 
\hline 
4 & $x_1 > 4, x_3 > 8.5$ & $-3$ & $-1$ & 2 & 1 \\ 
\hline 
\end{tabular} 
\end{subtable}%
\begin{subtable}{.45\linewidth}
\centering
\caption{Setting 2:}
\begin{tabular}{|c|c|c|c|}
\hline 
Region & Change Points &$\beta_2$ & $\beta_3$ \\ 
\hline 
1 & $x_1 \leq 6, x_4 \leq 1.5$ & $4.2$ & $-4.6$ \\ 
\hline 
2 & $x_1 > 6, x_4 \leq 1.5$ & $-4.2$ & $-4.6$ \\ 
\hline 
3 & $x_1 \leq 6, x_4 > 1.5$ & $4.2$ & $4.6$ \\ 
\hline 
4 & $x_1 > 6, x_4 > 1.5$ & $-4.2$ & $4.6$ \\ 
\hline 
\end{tabular} 
\end{subtable} 
\end{table}

The following results were obtained:

{\bf Setting 1:} For $\varepsilon \sim N(0,1)$, all 500 trials selected the correct sets $\mathcal{B} = \{x_1, x_3\}$ and $\mathcal{L}$ for both $n = 200$ and $n = 400$. Table~\ref{reg12} displays the error distributions between the estimated change points and true change points (of those settings with the correct $\mathcal{B}$ and $\mathcal{L}$). It can be seen that the proposed method estimated the change points correctly with high accuracy. Lastly, results of variable selection within each region is shown in Table~\ref{reg1variable}.  
With a higher noise level $\varepsilon \sim N(0,16)$, 499 and 500 of the trials give the correct sets $\mathcal{B}$ and $\mathcal{L}$ for $n = 200$ and $n = 400$ respectively. The error distributions of the estimated change points and predictor selection results are shown similarly in Tables~\ref{reg12} and~\ref{reg1variable} respectively. One can see that increasing the sample size leads to higher predictor selection accuracy in both settings. 

{\bf Setting 2:} This setting aims to evaluate the situation when the change points occur at variables not significant in any subregions. For $\varepsilon \sim N(0,1)$, 498 trials selected the correct $\mathcal{B} = \{x_1, x_4\}$ and $\mathcal{L}$ for both $n = 200$ and $n = 400$.
With the higher noise level $\varepsilon \sim N(0,16)$, 498 selected the correct sets with $n = 200$, while all all 500 selected correctly with $n = 400$. Tables~\ref{reg12} and~\ref{reg1variable} show the results of this simulation. 

\begin{table}[htbp]
\small
\caption{Mean and standard error of difference between estimated and true change point for simulations in Section~\protect\ref{regressionsimulation}.}
\label{reg12}
\vspace*{-0.3cm}
\centering
\begin{tabular}{|*{6}{c|}} 
\cline{3-6} 
\multicolumn{2}{c|}{}&\multicolumn{2}{c|}{Setting 1} & \multicolumn{2}{c|}{Setting 2}\\
\cline{3-6}
\multicolumn{2}{c|}{}                                           & 
Change Point in $x_1$ &
Change Point in $x_3$ &
Change Point in $x_1$ &
Change Point in $x_4$ \\
\hline
\multirow{2}{*}{$N(0,1)$}&$n = 200$ &$0.0016$ (0.0008) & 0.0000 (0.0000) & $0.0002$ (0.0002) & $0.0000$ (0.0000)\\
&                                                   $n = 400$ &0.0007 (0.0004) & 0.0000 (0.0000) & $0.0000$ (0.0002) & $0.0000$ (0.0000)\\
\hline
\multirow{2}{*}{$N(0,16)$}&$n = 200$ &$0.0084$ (0.0019) &  $0.0000$ (0.0000) & $0.0013$ (0.0007) &  $0.0001$ (0.0001)\\
&                                                       $n = 400$ &$0.0012$ (0.0009) & $0.0000$ (0.0000) & $0.0004$ (0.0004) & 0.0000 (0.0000)\\
\hline
\end{tabular}
\end{table}

\begin{table}[!htb]
\small
\caption{Accuracy of variable selection with each region for simulations in Section~\protect\ref{regressionsimulation}.} 
\label{reg1variable}
\vspace*{-0.3cm}
\centering
\begin{tabular}{|c|c|cccc|cccc|}
\cline{3-10} 
\multicolumn{2}{c|}{}& \multicolumn{4}{c|}{Setting 1} & \multicolumn{4}{c|}{Setting 2}\\
\cline{3-10} 
\multicolumn{2}{c|}{}& Region 1 & Region 2 & Region 3 & Region 4 & Region 1 & Region 2 & Region 3 & Region 4\\
\hline
\multirow{2}{*}{$N(0,1)$}& $n = 200$ & 98\% & 96\% & 97\% & 97\% & 95\% & 94\% & 93\% & 95\%\\ 
&                                                  $n = 400$ & 98\% & 98\% & 99\% & 98\% & 97\% & 98\% & 99\% & 98\%\\ 
\hline 
\multirow{2}{*}{$N(0,16)$}& $n = 200$ & 79\% & 22\% & 71\% & 18\% & 96\% & 95\% & 94\% & 95\%\\
&                                                       $n = 400$ & 97\% & 47\% & 95\% & 53\% & 97\% & 98\% & 99\% & 98\%\\ 
\hline 
\end{tabular}
\end{table}

\subsection{Simulation Study: Classification using Logistic and Probit Regression}
\label{classificationsimulation}
The following two settings assess the accuracy of the proposed method for logistic and probit regression. All settings were simulated under model~(\ref{logisticmodel}) and~(\ref{probitmodel}). Both settings consist of three predictors, and the distribution for each predictor and the $\boldsymbol{\beta}$ coefficients are listed in Tables~\ref{classificationexdist} and~\ref{classificationdistbreaks} respectively. Change points exist at $x_1$ with $l_1 = 2$ for the first setting, and at $x_1$ and $x_3$ with $l_1 = l_3 = 1$ for the second setting. Again all simulations were conducted with 500 trials of $n=200$ and $n=400$ data points each. 

\begin{table}[ht]
\caption{Variable distribution for logistic and probit regression simulations; $*$ indicates variable with change points, $\dagger$ indicates significant variable in at least one region.  Note that in Setting 2 the first predictor is discrete valued.}
\label{classificationexdist}
\vspace*{-0.5cm}
\begin{center}
\begin{tabular}{cccc}
\hline
& $x_1$ & $x_2$ & $x_3$ \\
\hline
Setting 1 & $U(0, 30)^{*\dagger}$ & $U(0,10)^{\dagger}$ & $U(0, 10)$ \\
Setting 2 & $U\{0,\dots,6\}^*$ & $U(0,20)^{\dagger}$ & $U(-10,10)^{*\dagger}$ 
\end{tabular}
\end{center}
\end{table}

\begin{table}[!htb]
\small
\captionsetup[subtable]{labelformat=empty}
\caption{Change points and coefficients of $\beta$ for logistic and probit regression simulation.}
\label{classificationdistbreaks}
\vspace*{-0.3cm}
\begin{subtable}{.5\linewidth}
\centering
\caption{Setting 1:}
\begin{tabular}{|c|c|c|c|c|}
\hline 
Region & Change Points & Intercept & $\beta_1$ & $\beta_2$ \\ 
\hline 
1 & $x_1 \leq 10$ & 0 & 1 & $-1.5$ \\
\hline 
2 & $10 < x_1 \leq 20$ & 0 & 1 & $-4.5$ \\
\hline 
3 & $x_1 > 20$ & 15 & $-1$ & 2\\
\hline
\end{tabular} 
\end{subtable}%
\begin{subtable}{.4\linewidth}
\centering
\caption{Setting 2:}
\begin{tabular}{|c|c|c|c|}
\hline 
Region & Change Points  & $\beta_2$ & $\beta_3$ \\ 
\hline 
1 & $x_1 \leq 3, x_3 \leq 0$ & $2.1$ & 5.1 \\ 
\hline 
2 & $x_1 > 3, x_3 \leq 0$ & 4.0 & $2.4$ \\ 
\hline 
3 & $x_1 \leq 3, x_3 > 0$ & 4.2 & $-5.0$ \\ 
\hline 
4 & $x_1 > 3, x_3 > 0$ &  $-2.9$ & $3.2$ \\ 
\hline 
\end{tabular} 
\end{subtable} 
\end{table}

The following results were obtained:
\begin{itemize}
\item Logistic Regression:\\
{\bf Setting 1:} A total of 464 trials selected the correct set $\mathcal{B}$ and $\mathcal{L}$ for $n = 200$. With increased sample size, the number of trials with correct selection increased to 496. The results are shown in Tables~\ref{classificationresult} and~\ref{classificationvariable} in a similar fashion as before. This setting tests the proposed method's ability on detecting multiple change points at one dimension. Even though the coefficient of $x_1$ did not change across the first two regions, the proposed method still detected the change on the second set of coefficients. 

{\bf Setting 2:} All 500 trials selected the correct $\mathcal{B}$ and $\mathcal{L}$ for both $n = 200$ and $n = 400$. One can see that if a change point occurs at a discrete variable, the proposed method can detect it with very high accuracy. The results are shown in Tables~\ref{classificationresult} and~\ref{classificationvariable} in a similar fashion as before.

\item Probit Regression:\\
{\bf Setting 1:} The results differ only slightly for probit regression. For the smaller sample size $n=200$, 499 of those selected $\mathcal{B}$ and $\mathcal{L}$ correctly. With increased sample size, all 500 trials selected $\mathcal{B}$ correctly. Similar change point and variable selection results for the probit regressions are presented in Tables~\ref{classificationresult} and~\ref{classificationvariable}.

{\bf Setting 2:} All 500 trials selected $\mathcal{B}$ correctly for both $n = 200$ and $n = 400$. As one can see, the results for logistic regression and probit regression are very similar. 
\end{itemize}

\begin{table}[htbp]
\small
\caption{Mean and standard error of difference between estimated and true change point for simulations in Section~\protect\ref{classificationsimulation}.}
\label{classificationresult}
\vspace*{-0.3cm}
\centering
\begin{tabular}{|*{6}{c|}} 
\cline{3-6} 
\multicolumn{2}{c|}{}&\multicolumn{2}{c|}{Setting 1} & \multicolumn{2}{c|}{Setting 2}\\
\cline{3-6}
\multicolumn{2}{c|}{}                                           & Change Point 1 in $x_1$ & Change Point 2 in $x_1$ & Change Point in $x_1$ & Change Point in $x_3$\\
\hline
\multirow{2}{*}{Logistic}&$n = 200$ &$-0.2385$ (0.0564) & $0.0541$ (0.0098) & 0.0000 (0.0000) & 0.0167 (0.0156)\\
&                                                                        $n = 400$ &$-0.0262$ (0.0092) & 0.0164 (0.0038) & 0.0000 (0.0000) & $-0.0017$ (0.0064)\\
\hline
\multirow{2}{*}{Probit}&$n = 200$ &$-0.1179$ (0.0301) & 0.0465 (0.0068) & 0.0000 (0.0000) & 0.0139 (0.0145)\\
&                                                                  $n = 400$ &$-0.0064$ (0.0085) & $0.0154$ (0.0038) & 0.0000 (0.0000) & $-0.0011$ (0.0065)\\
\hline
\end{tabular}
\end{table}

\begin{table}[!htb]
\small
\caption{Accuracy of variable selection with each region for simulations in Section~\protect\ref{classificationsimulation}.} 
\label{classificationvariable}
\vspace*{-0.3cm}
\centering
\begin{tabular}{|c|c|ccc|cccc|}
\cline{3-9} 
\multicolumn{2}{c|}{}& \multicolumn{3}{c|}{Setting 1} & \multicolumn{4}{c|}{Setting 2}\\
\cline{3-9} 
\multicolumn{2}{c|}{}& Region 1 & Region 2 & Region 3 & Region 1 & Region 2 & Region 3 & Region 4\\
\hline
\multirow{2}{*}{Logistic Regression}& $n = 200$ & 91\% & 78\% & 61\% & 86\% & 90\% & 93\% & 92\%\\ 
&                                                  $n = 400$ & 98\% & 93\% & 92\% & 91\% & 92\% & 94\% & 91\%\\ 
\hline 
\multirow{2}{*}{Probit Regrssion}&      $n = 200$ & 89\% & 88\% & 76\% & 96\% & 90\% & 98\% & 96\%\\
&                                                       $n = 400$ & 96\% & 91\% & 95\% & 91\% & 97\% & 97\% & 95\%\\ 
\hline 
\end{tabular}
\end{table}

\subsection{Real Data Analysis: Concrete Compressive Strength Data for Regression}

The construction of high-performance concrete (HPC) relies on not only the common ingredients such as water, cement, fine and coarse aggregates, but also other cementitious materials such as fly ash, superplasticizer, blast furnace slag, etc. However, due to the highly nonlinear relationship between concrete compressive strength and its ingredients, modeling the behavior is extremely difficult. Much work on modeling the strength has been done using artificial neural networks \citep{Yeh98}. In this subsection, the proposed method is used to compare with the prediction performance of neural networks. Along with these two methods, classical regression, regression tree and kernel support vector regression (with radial basis kernel) will be used as well for comparison.

The concrete compressive strength data set \citep{Lichman:2013} consists of 1030 data points, each with 8 predictors, including the variables mentioned above. The goal is to use these predictors to model the concrete compressive strength. 
To evaluate the performance of the methods, a fitted model was first obtained with a training data set consisting of 721 randomly selected observations, and then the remaining 309 observations were used as a testing data set to estimate the prediction error of the fitted model. This process was repeated 100 times (i.e., with 100 different training and testing data sets). The averaged root mean squared prediction error and its standard error are given in Table~\ref{concrete}. All methods (except the proposed one) were trained using the $\texttt{R}$ package $\texttt{caret}$ \citep{caret}. 

\begin{table}[H]
\caption{Averaged root mean squared prediction errors of the concrete strength data set for the five methods considered.  Numbers in parentheses are standard errors.}
\label{concrete}
\begin{center}
\begin{tabular}{ccccc}
\hline
Proposed & Classical Regression & Regression Tree & Neural Networks & Kernel SVR \\ 
\hline 
6.41 (0.04) & 10.52 (0.04) & 12.24 (0.05) & 5.97 (0.04) & 6.85 (0.04)
\end{tabular} 
\end{center}
\end{table}

\begin{figure}
\centering
\includegraphics[scale = 0.36]{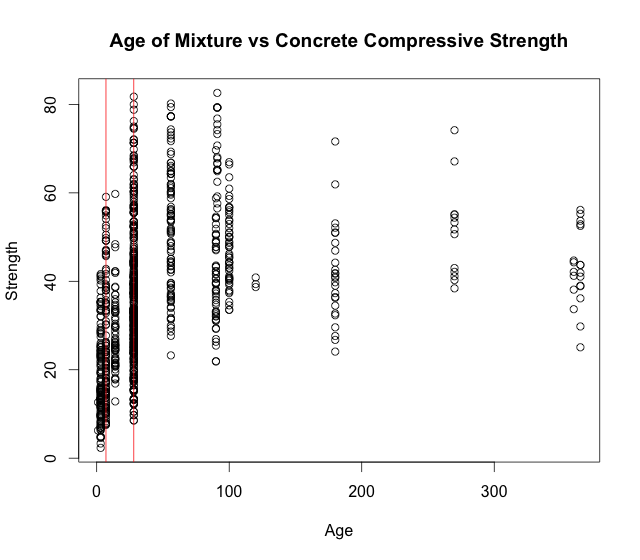}
\includegraphics[scale = 0.36]{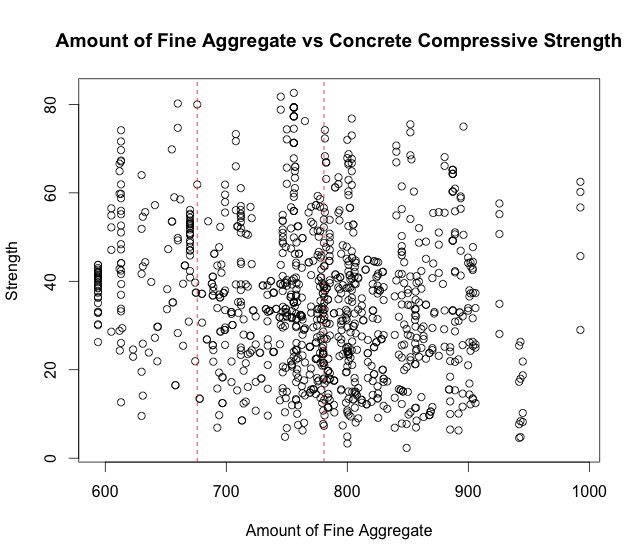}
\caption{Change points for concrete compressive strength data.} 
\label{concretefig}
\end{figure}
From Table~\ref{concrete} one can see the proposed method outperforms most of the competitor methods. It is important to note that even though the proposed method does not perform as well as neural network in terms of prediction, the proposed method captures the relationship between the response and predictors in an easily interpretable way.  Neural network is a highly nonlinear model, thus it can model well the nonlinear behavior of the data, but at the same time the modeled relationship of response and predictors may not be straightforwardly interpreted. The proposed method aims to partition the data space and model each subregion using linear models, therefore one can still interpret the results of the final model. Since rigorous mathematical arguments support the methodology for linear models, statistical inference becomes possible. Figure~\ref{concretefig} shows the change points detected from one of the trials. It shows that both age and fine aggregate variable have two change points, and hence the resulting model partitions the data space into 9 regions. One possible interpretation of the final model is as follows. Compressive strength requires most of the variables to model for high level of fine aggregate, while it only requires at most half of the variables for smaller level of fine aggregate. In general, the variable age has a larger positive influence (larger coefficient value) on the compressive strength for regions with small age values ($< 7$), whereas the effects of all other variables are consistent throughout all regions. 

\subsection{Real Data Analysis: Red Wine Data for Classification}
The red wine data set \citep{Lichman:2013} consists of 961 observations of red Portuguese ``Vinho Verde'' wine. A total of 744 observations are considered bad quality while 217 are considered high quality. Each observation contains 11 predictors. 
The goal is to use these predictors to build a classifier to determine whether a specific set of conditions will lead to good or bad wine quality. The following five existing methods, classical logistic and probit regression, CART, naive bayes and neural networks were used to compare with the performance of the proposed method.  These five classifiers were also trained using the $\texttt{caret}$ package.  The training data set was of size 675, and hence the testing data set was of size 286. As before, for each method the procedure was repeated 100 times and the averaged misclassification error rates (and standard errors) on the testing sets are presented in Table~\ref{wine}.  Here the proposed method seems to give the best performance.


\begin{table}[h]
\caption{Classification error rates for the Red Wine data set.  Numbers in parentheses are standard errors.}
\label{wine}
\vspace*{-0.3cm}
\centering
{\scriptsize
\begin{tabular}{ccccccc}
\hline
proposed (logistic) & proposed (probit) & logistic regression & probit regression & CART & naive bayes & neural networks\\
\hline
0.149 (0.020) & 0.173 (0.021) & 0.204 (0.023) & 0.205 (0.023) & 0.214 (0.024) & 0.225 (0.024) & 0.202 (0.023)
\end{tabular}
}
\end{table}

\section{Conclusion}
\label{conclusion}
This paper presents a new methodology for estimating partition-wise models with linear regression and logistic/probit models as region specifiers.  Under certain distributional assumptions, consistency properties are established for the estimates for the number of change points and their locations. From empirical simulations and real data analysis, there is strong evidence indicating that the proposed method is competitive with existing regression and classification methods. 

\appendix
\section{Appendix}
\section*{Proof of Theorem~\ref{breaklocthm} }
The following section will first present a lemma then the proof of Theorem~\ref{breaklocthm}.
\ignore{
\begin{lemma}
\label{lemma1}
Suppose $(\textbf{x}_i^\prime,y_i)$ follows a linear model $y_i = \textbf{x}'_i\beta + \varepsilon_i$ with $\varepsilon_i \sim N(0, \sigma^2)$, then for $\kappa \in [0,1],$
\[
\frac{1}{n}\sum^{\floor{\kappa n}}_{i=1}\hat{\varepsilon}^2_i \to \kappa \sigma^2
\]
with probability 1 as $n \rightarrow \infty$, where $\hat{\varepsilon}_i = y_i - \textbf{x}'_i\hat{\beta}$, and $\hat{\beta}$ is the maximum likelihood estimate of $\beta$. 
\end{lemma}
\begin{proof}
It is obviously correct if $\kappa = 0$. Let $\kappa \in (0, 1]$. Note that the linear model implies $\hat{\varepsilon}_i - \varepsilon_i = \textbf{x}'_i(\beta - \hat{\beta})$. Also, $\hat{\beta}$ is a strongly consistent estimator for $\beta$. Therefore, with probability one, 
\[
\frac{1}{\floor{\kappa n}}\bigg|\sum^{\floor{\kappa n}}_{i = 1}(\hat{\varepsilon}_i - \varepsilon_i)\bigg| = \frac{1}{\floor{\kappa n}}\bigg|\sum^{\floor{\kappa n}}_{i = 1}\textbf{x}'_i(\hat{\beta} - \beta)\bigg| \to 0.
\]
Therefor, $\frac{1}{\floor{\kappa n}}\sum \limits^{\floor{\kappa n}}_{i = 1} \hat{\varepsilon}_i \to E[\varepsilon_1]$ with probability one by the strong law of large numbers. Since $g(x) = x^2$ is a continuous and measurable function, $\frac{1}{\floor{\kappa n}}\sum\limits^{\floor{\kappa n}}_{i = 1} \hat{\varepsilon}_i^2 \to E[\varepsilon^2_1] = \sigma^2$.
\end{proof}
}
\begin{lemma}
\label{breaksigma}
Suppose $(\textbf{x}_i^\prime,y_i)$ follows a partition-wise linear model (\ref{regionreg}). Then, with probability 1, 
\[
\frac{1}{\hat{n}_r}\sum_{i \in \mathcal{\hat{R}}_r} \hat{\varepsilon}^2_i \to \sigma^2 + B(\mathcal{\hat{R}}_r)
\]
as $n \rightarrow \infty$, where $B(\mathcal{\hat{R}}_r)$ is defined in the proof, and $\mathcal{\hat{R}}_r$ is the partition induced by the change points $\{\mu_{b}, \nu_{b}\}$, $\mu_b < \nu_b$, $\forall$ $b \in \mathcal{B}$.
\end{lemma}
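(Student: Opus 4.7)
The plan is to decompose the candidate region $\mathcal{\hat{R}}_r$ into its intersections with the true regions, apply the strong law of large numbers (SLLN) to obtain limiting design second moments on each piece, and then read off the limit of $\hat{\beta}_r$ and of the residual sum of squares over $\mathcal{\hat{R}}_r$.

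First, I would write $\mathcal{\hat{R}}_r=\bigsqcup_{j} A_j$, where $A_j=\mathcal{\hat{R}}_r\cap\mathcal{R}_j^0$ ranges over the true regions that meet $\mathcal{\hat{R}}_r$. The $\delta$-separation assumption on the true change points together with the assumption that $\mathcal{\hat{R}}_r$ itself is bounded away from degenerate shapes guarantees that each nonempty $A_j$ contains of order $n$ observations, so the proportions $p_j:=\lim_{n\to\infty}|A_j|/\hat{n}_r$ exist. The boundedness condition $|\textbf{x}_i|_\infty\le M_0$ combined with the coordinate-wise SLLN then yields the almost-sure limits
\[
\Sigma_j \;:=\; \lim_{n\to\infty}\frac{1}{|A_j|}\sum_{i\in A_j}\textbf{x}_i\textbf{x}_i',\qquad
\Sigma \;:=\; \sum_j p_j\Sigma_j,
\]
with $\Sigma$ positive definite (otherwise OLS over $\mathcal{\hat{R}}_r$ is not identified, a case I would exclude at the outset).

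Next, plugging these limits into the OLS normal equations and using that $y_i=\textbf{x}_i'\beta_j^0+\varepsilon_i$ for $i\in A_j$, with $\varepsilon_i$ independent of $\textbf{x}_i$ and mean zero, the SLLN delivers
\[
\hat{\beta}_r \;\longrightarrow\; \bar{\beta} \;:=\; \Sigma^{-1}\sum_j p_j\Sigma_j\beta_j^0 \qquad \text{with probability one.}
\]
For $i\in A_j$ I would then write $\hat{\varepsilon}_i=\textbf{x}_i'(\beta_j^0-\hat{\beta}_r)+\varepsilon_i$, square, and sum. Averaging the three resulting pieces over $\mathcal{\hat{R}}_r$ and applying SLLN slice by slice: the quadratic term converges to $(\beta_j^0-\bar{\beta})'\Sigma_j(\beta_j^0-\bar{\beta})$, the cross term to zero (by independence of $\textbf{x}_i$ and $\varepsilon_i$), and the noise term to $\sigma^2$. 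Weighting by $p_j$ and summing over $j$ gives
\[
\frac{1}{\hat{n}_r}\sum_{i\in\mathcal{\hat{R}}_r}\hat{\varepsilon}_i^2 \;\longrightarrow\; \sigma^2 \;+\; \sum_j p_j(\beta_j^0-\bar{\beta})'\Sigma_j(\beta_j^0-\bar{\beta}),
\]
which is the assertion with $B(\mathcal{\hat{R}}_r):=\sum_j p_j(\beta_j^0-\bar{\beta})'\Sigma_j(\beta_j^0-\bar{\beta})\ge 0$, vanishing precisely when $\mathcal{\hat{R}}_r$ lies inside a single true region (so that $\bar{\beta}=\beta_j^0$ for all $j$ with $p_j>0$).

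The main obstacle is not the one-region calculation itself but ensuring the a.s.\ limits $p_j$ and $\Sigma_j$ are well defined and that the above convergence holds uniformly over the class of candidate regions $\mathcal{\hat{R}}_r$ that later theorems must sweep over. Pointwise, SLLN under boundedness suffices; for uniformity over $(\mathcal{B},\mathcal{L},\Lambda)\in\mathcal{M}$ one either needs a covering argument on the compact set of admissible $\Lambda$ values, or a uniform-in-$\Lambda$ moment bound on the empirical design. The minimum-span restriction $\delta\ll\min_{j_b}(\lambda_{j_b,b}-\lambda_{j_b-1,b})$ keeps the parameter space compact and is the key ingredient that allows such uniformity to be invoked when the lemma is later applied in the proofs of Theorems~\ref{breaklocthm} and~\ref{breakconsistthm}.
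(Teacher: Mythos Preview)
Your argument is correct and is in the same spirit as the paper's, but the decomposition you use is different and, in fact, more explicit. The paper introduces as an intermediate object the residuals $\hat{\varepsilon}_{i,r}$ obtained by fitting each \emph{true} region separately, writes $\hat{\varepsilon}_i^2=(\hat{\varepsilon}_i^2-\hat{\varepsilon}_{i,r}^2)+\hat{\varepsilon}_{i,r}^2$, shows by the law of large numbers that the second piece averages to $\sigma^2$ over $\mathcal{\hat{R}}_r$, and then simply asserts that the averaged first piece converges to some $B(\mathcal{\hat{R}}_r)>0$ whenever the candidate boundaries miss a true change point. You instead compute the almost-sure limit $\bar{\beta}=\Sigma^{-1}\sum_j p_j\Sigma_j\beta_j^0$ of the misspecified OLS estimator directly and expand $\hat{\varepsilon}_i=\textbf{x}_i'(\beta_j^0-\hat{\beta}_r)+\varepsilon_i$, which yields the closed form $B(\mathcal{\hat{R}}_r)=\sum_j p_j(\beta_j^0-\bar{\beta})'\Sigma_j(\beta_j^0-\bar{\beta})$. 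This buys you transparency: nonnegativity of $B$ and the characterization of when it vanishes are immediate from the quadratic-form expression, whereas in the paper these are stated without a formula. Your closing remarks on uniformity over $\Lambda$ go beyond what the lemma itself asserts (the statement is for a fixed candidate region), but they correctly anticipate the issue that arises when the lemma is invoked inside the contradiction argument for Theorem~\ref{breaklocthm}.
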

\begin{proof}
Suppose for each $b$, $\mu_{b} \in [\lambda^0_{v_b,b}, \lambda^0_{v_b+1,b})$ and $\nu_b \in (\lambda^0_{v_b',b}, \lambda^0_{v_b'+1,b}]$ for some $0 \leq v_b < v'_b \leq l^0_b$, where $l^0_b$ denotes the true number of change points for the $b^{th}$ predictor, and $\lambda^0_{v_b,b}$ is the true relative location of the $v_b^{th}$ change point for the $b^{th}$ predictor, $1 \leq v_b \leq l^0_b$ (with $\lambda^0_{0,b} = \frac{1}{n}$ and $\lambda^0_{l_b+1,b} = 1$). Note that $\hat{\varepsilon}_i^2$ can be rewritten as $\hat{\varepsilon}_i^2 = (\hat{\varepsilon}_i^2 - \hat{\varepsilon}^2_{i,r}) + \hat{\varepsilon}^2_{i,r}$, $i = 1, \dots, n^0_r, r = 1, \dots, R^0$, where $\hat{\varepsilon}_i$ are the residuals from fit using $\mathcal{\hat{R}}_r$, and $\hat{\varepsilon}_{i,r}$ are the residuals from fit using the true partitions in $\mathcal{R}^0$. Then using the law of large numbers,
\begin{align*}
\frac{1}{\hat{n}_r}\sum_{i \in \mathcal{\hat{R}}_r}\hat{\varepsilon}^2_{i,r} &= \frac{1}{\hat{n}_r}\Big[\sum_{j \in V} \sum_{i \in \mathcal{R}^0_j} \hat{\varepsilon}^2_{i,j} + \sum_{k \in \tilde{V}}\sum_{i \in \mathcal{R}^0_k \cap \mathcal{\hat{R}}_r} \hat{\varepsilon}^2_{i,k}\Big]\\
&\to \frac{1}{\hat{\alpha}_r}\alpha_j\sigma^2 + \frac{1}{\hat{\alpha}_r}\alpha_k \sigma^2 = \sigma^2,
\end{align*}
where $V$ is the subset of $\{1, \dots, R^0\}$ such that $\mathcal{R}^0_j \subset \mathcal{\hat{R}}_r$, and $\tilde{V}$ is the subset of $\{1, \dots, R^0\}\backslash V$ such that $R^0_j \cap \mathcal{\hat{R}}_r \neq \emptyset$, and $\hat{\alpha}_r, \alpha_j, \alpha_k$ are defined as $\floor{\hat{\alpha}_rn} = \hat{n}_r, \floor{\alpha_jn} = \sum \limits_{j \in V} \sum \limits_{i \in \mathcal{R}^0_j}1$, and $\floor{\alpha_kn} = \sum \limits_{k \in \tilde{V}}\sum \limits_{i \in \mathcal{R}^0_k \cap \mathcal{\hat{R}}_r} 1$ respectively. Now define $\Delta \varepsilon^2_{i, r} = \hat{\varepsilon}_i^2 - \hat{\varepsilon}^2_{i,r}$. Note that each of these averages will converge to a nonzero bias term, i.e.
\[
\frac{1}{\hat{n}_r}\sum_{i \in \mathcal{\hat{R}}_r}\Delta \hat{\varepsilon}^2_{i,r} = \frac{1}{\hat{n}_r}\Big[\sum_{j \in V} \sum_{i \in \mathcal{R}^0_j}\Delta \hat{\varepsilon}^2_{i,j} + \sum_{k \in \tilde{V}}\sum_{i \in \mathcal{R}^0_k \cap \mathcal{\hat{R}}_r}\Delta  \hat{\varepsilon}^2_{i,k}\Big] \to B(\mathcal{\hat{R}}_r),
\]
where $B(\mathcal{\hat{R}}_r)$ is a bias term whose exact form depends on the true segmentation. Moreover, $B(\mathcal{\hat{R}}_r) > 0$ unless all the $\{\mu_b, \nu_b\}$ coincide with the true change point locations. If all the estimated locations coincide with the true change point locations, then 
\[
\frac{1}{\hat{n}_r}\sum_{i \in \mathcal{\hat{R}}_r} \hat{\varepsilon}^2_i \to \sigma^2,
\]
and this completes the proof.
\end{proof}

\begin{proof}[\bf Proof of Theorem~\ref{breaklocthm}]
Suppose the true partitions are denoted as $\mathcal{R}^0 = \{\mathcal{R}^0_1, \dots, \mathcal{R}^0_{R^0}\}$, and the estimated partitions by the MDL criterion (\ref{regestimate}) as $\mathcal{\hat{R}} = \{\mathcal{\hat{R}}_1, \dots, \mathcal{\hat{R}}_{R^0}\}$. Suppose as $n \rightarrow \infty$, $\mathcal{\hat{R}} \not \to \mathcal{R}^0$ with probability 1. Then by boundedness, there exists a subsequence along which $\mathcal{\hat{R}}$ converges to with probability 1, say $\mathcal{\hat{R}} \to \mathcal{R}^* \neq \mathcal{R}^0$. Then, for each $r' \in \{1, \dots, R^0\}$, either 1) $\mathcal{R}^*_{r'} \subset \mathcal{R}^0_r$ for some $r$, or 2) $\mathcal{R}^*_{r'} = \cup_r (\mathcal{R}^*_{r'} \cap \mathcal{R}^0_r)$. Note that with probability 1, $\frac2n\mbox{MDL}(R^0, \mathcal{R}^*) \sim \log(\frac1n\mbox{RSS}^*_{R^0})$, where for two sequences $a_n$ and $b_n$, $a_n \sim b_n$ if $\lim\limits_{n \to \infty}\frac{a_n}{b_n} = 1$, and $\mbox{RSS}^*_{R^0} = \sum\limits_{\mathcal{R}^*_r \in \mathcal{R}^*}\sum\limits_{i \in \mathcal{R}^*_r} (y_i - \hat{f}_r(x_i))^2 = \sum\limits_{\mathcal{R}^*_r \in \mathcal{R}^*}\sum\limits_{i \in R^*_r} \hat{\varepsilon}_i^2$. Now, for case one,
\[
\frac{1}{n} \sum_{i \in \mathcal{R}^*_{r'}}\hat{\varepsilon}^2_i \to \alpha_{r'}^* \sigma^2
\]
by an application of law of large numbers, where $\floor{\alpha^*_{r'} n} = |R^*_{r'}|$. For case two, Lemma~\ref{breaksigma} implies
\[
\lim_{n\to\infty} \frac{1}{n}\sum_{i \in \mathcal{R}^*_{r'}} \hat{\varepsilon}^2_i = \alpha^*_{r'}\sigma^2 + B(\mathcal{R}^*_{r'}) > \alpha^*_{r'}\sigma^2.
\]
Together, with the concavity of $\log$, 
\begin{align*}
\lim_{n \to \infty} \frac{2}{n}\mbox{MDL}(R^0, \mathcal{R}^*) &= \lim_{n \to \infty} \log\Big(\frac{1}{n}\sum\limits_{\mathcal{R}^*_r \in \mathcal{R}^*}\sum\limits_{i \in \mathcal{R}^*_{r'}} (y_i - \hat{f}_r(x_i))^2\Big) \\
&> \lim_{n \to \infty} \log\Big(\frac{1}{n}\sum\limits_{\mathcal{R}^0_r \in \mathcal{R}^0}\sum\limits_{i \in \mathcal{R}^0_r} (y_i - \hat{f}_r(x_i))^2\Big)\\
&= \lim_{n \to \infty}\frac{2}{n}\mbox{MDL}(R^0, \mathcal{R}^0) \geq \lim_{n \to \infty} \frac{2}{n}\mbox{MDL}(R^0, \mathcal{R}^*),
\end{align*}
which is a contradiction. Hence $\hat{\mathcal{R}} \to \mathcal{R}^0$ with probability 1. The proof for the logistic/probit model follows similarly.
\end{proof}

\section*{Proof of Theorem~\ref{breakconsistthm}}
The following section will present five lemmas along with the proof of Theorem~\ref{breakconsistthm}.
\begin{lemma}
\label{morebreak}
Suppose $\hat{\mathcal{L}}$ is an estimator from (\ref{regestimate}). Then $\forall$ $\hat{l}_b \in \hat{\mathcal{L}}$, $P(\hat{l}_b \geq l_b^0) \rightarrow 1$ with probability 1 as $n \rightarrow \infty$.
\end{lemma}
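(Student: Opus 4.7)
The plan is to argue by contradiction, showing that underfitting in the $b$-direction is incompatible with the MDL-minimality of $(\hat{\mathcal{L}},\hat{\Lambda})$. Specifically, suppose that with positive probability one has $\hat{l}_b < l_b^0$ for some $b$ along a subsequence $n_k \to \infty$. By boundedness of $\hat{\Lambda}$, extract a further subsequence along which $\hat{\Lambda}$ converges to some limit $\Lambda^*$. Because fewer than $l_b^0$ change points are used in coordinate $b$, at least one true change location $\lambda^0_{j_b,b}$ lies strictly inside an estimated strip; consequently at least one estimated region $\hat{\mathcal{R}}_{r^*}$ strictly contains a union of two or more adjacent true regions with distinct coefficient vectors $\beta^0_{j}\neq\beta^0_{j'}$.

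Next, I would invoke Lemma~\ref{breaksigma}: restricting the fit to the oversized region $\hat{\mathcal{R}}_{r^*}$ and estimating a single linear model over it yields
\[
\frac{1}{\hat{n}_{r^*}}\sum_{i\in \hat{\mathcal{R}}_{r^*}}\hat{\varepsilon}_i^{\,2}\;\longrightarrow\; \sigma^2+B(\hat{\mathcal{R}}_{r^*})
\]
almost surely, with $B(\hat{\mathcal{R}}_{r^*})>0$ because the region straddles a genuine coefficient change. Aggregating over all estimated regions, and using Lemma~\ref{breaksigma} (or the ordinary law of large numbers on regions that happen to lie inside a single true region) for the remaining regions, one obtains $\frac{1}{n}\sum_{\hat{\mathcal R}_r}\sum_{i\in \hat{\mathcal R}_r}\hat{\varepsilon}_i^{\,2}\to \sigma^2+B^*$ with $B^*>0$ a.s.

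Now plug this into $\mbox{MDL}_{\rm reg}$. The dominant term is $\tfrac{n}{2}\log(\hat{\sigma}^2)$; scaling by $2/n$ yields $\log(\sigma^2+B^*)$ in the limit. Every remaining term in (\ref{MDLregression}) is of order $\log n$ (the predictor-selection, change-count, region-size, and parameter-encoding terms), hence negligible after multiplication by $2/n$. Thus, along the subsequence,
\[
\limsup_{k\to\infty}\frac{2}{n_k}\mbox{MDL}_{\rm reg}(\hat{\mathcal{B}},\hat{\mathcal{L}},\hat{\Lambda},\hat{\boldsymbol{\beta}})\;\geq\;\log(\sigma^2+B^*)\;>\;\log\sigma^2\;=\;\lim_{k\to\infty}\frac{2}{n_k}\mbox{MDL}_{\rm reg}(\mathcal{B}^0,\mathcal{L}^0,\Lambda^0,\boldsymbol{\beta}^0),
\]
where the right-hand equality uses Lemma~\ref{breaksigma} applied to the true partition (so the bias terms vanish) together with the strong consistency of the region-wise OLS estimators. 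This contradicts $(\hat{\mathcal{L}},\hat{\Lambda})$ being an MDL minimizer, establishing that $P(\hat{l}_b\geq l_b^0)\to 1$ for every $b$.

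The main obstacle I anticipate is not the asymptotic comparison itself, which is clean once the RSS limit is identified, but rather the subsequence-extraction step and the verification that $B^*>0$ uniformly on the conjectured bad event. Care is needed because the limit partition $\mathcal{R}^*$ is random, and one must ensure that the strict positivity of $B(\hat{\mathcal{R}}_{r^*})$ is preserved in the limit: this relies on the separation assumption $\delta\ll\min_{j_b}(\lambda^0_{j_b,b}-\lambda^0_{j_b-1,b})$ in (\ref{breakcandidate}) together with the assumption $\beta^0_j\neq\beta^0_{j'}$ across adjacent true regions, both of which guarantee that the bias contribution from combining two distinct regions into one cannot be made arbitrarily small.
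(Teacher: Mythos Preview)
Your proposal is correct and follows essentially the same route as the paper. Both arguments hinge on Lemma~\ref{breaksigma}: if $\hat{l}_b<l_b^0$ for some $b$, at least one estimated region must straddle two or more true regions, so $\tfrac{1}{n}\mathrm{RSS}(\hat{\mathcal{R}})\to\sigma^2+B$ with $B>0$, which contradicts MDL-minimality since the true partition yields limit $\sigma^2$; the paper's version is terser (it splits into the cases ``all $b$'' versus ``some $b$'' and stops once the RSS bias is established, leaving the MDL comparison implicit), while you spell out the subsequence extraction and the $\tfrac{2}{n}\mbox{MDL}$ comparison explicitly, but the substance is the same.
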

\begin{proof}
Consider the following two cases:
\begin{itemize}
\item Case 1: $\hat{l}_b < l^0_b$ $\forall$ $b \in \mathcal{B}$. Then $\hat{R} < R^0$. By Lemma~\ref{breaksigma}, there exists a partition region $\hat{\mathcal{R}}_r$ such that it intersects at least two or more true regions $\mathcal{R}^0_j$. Thus by Lemma~\ref{breaksigma}, $\frac1nRSS(\hat{\mathcal{R}}) \rightarrow \sigma^2 + B(\hat{\mathcal{R}})$. 
\item Case 2: $\hat{l}_b < l^0_b$ for some $b \in \mathcal{B}$. If $\prod \limits_{b \in \mathcal{B}} (\hat{l}_b + 1) = \hat{R} < R^0 = \prod \limits_{b \in \mathcal{B}} (l_b^0 + 1)$, then following case 1, $\frac1nRSS(\hat{\mathcal{R}}) \rightarrow \sigma^2 + B(\hat{\mathcal{R}})$. If $\hat{R} \geq R^0$, this implies for some $b \in \mathcal{B}$ there are too many pieces. Thus there still exists at least one $\hat{\mathcal{R}}_r$ that intersects at least two or more $\mathcal{R}^0_j$. Together with case 1 the claim holds. \qedhere
\end{itemize}
\end{proof}

\begin{lemma}
\label{split}
For each $1 \leq v_b \leq l^0_b$, $l^0_b < l_b \leq L_b$ and $b \in \mathcal{B}$, define the sets $\mathcal{A}_{v_b}(n) = \{(k_{1,b}, \dots, k_{l_b,b}):0 < k_{1,b} < \dots < k_{l_b,b} < n, |k_{s,b} - k^0_{v_b,b}| \geq [\log n]^2$ for all $1 \leq s \leq l_b\}$. Denote $\mathcal{A}_V = \{\mathcal{A}_{v_1}(n), \dots, \mathcal{A}_{v_B}(n)\}$ for any combinations of $V = \{v_1, \dots, v_B\}$, then 
\[
P(\hat{\mathcal{K}} \in \mathcal{A_V}) \rightarrow 0
\]
as $n \rightarrow \infty$. 
\end{lemma}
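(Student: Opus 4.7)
The plan is to show that any $\hat{\mathcal{K}} \in \mathcal{A}_V$ cannot be the MDL minimizer for large $n$, because such a configuration produces an inflated residual sum of squares that overwhelms the logarithmic penalties. First I would interpret membership in $\mathcal{A}_V$ geometrically: for each $b \in \mathcal{B}$, none of the $l_b$ estimated change points along predictor $b$ lies within $[\log n]^2$ of the true change point $k^0_{v_b, b}$. Consequently, the induced partition $\hat{\mathcal{R}}$ must contain at least one region $\hat{\mathcal{R}}_r$ that straddles the true boundary $\{x_b = k^0_{v_b, b}\}$ simultaneously for every $b$; equivalently, $\hat{\mathcal{R}}_r \cap \mathcal{R}^0_j \neq \emptyset$ for at least two distinct true regions $\mathcal{R}^0_j$, and the overlap contains a positive fraction of observations because the $[\log n]^2$ separation forces non-degenerate intersections on both sides of the missed change point.

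Next I would apply Lemma~\ref{breaksigma} to this straddling region to obtain
\[
\frac{1}{\hat{n}_r}\sum_{i \in \hat{\mathcal{R}}_r}\hat{\varepsilon}_i^2 \to \sigma^2 + B(\hat{\mathcal{R}}_r) \quad \text{a.s.,}
\]
with $B(\hat{\mathcal{R}}_r) > 0$ bounded away from zero for any configuration in $\mathcal{A}_V$, since the bias is a continuous function of the relative change point locations and the $[\log n]^2$ gap prevents any $\hat{\mathcal{R}}_r$ from degenerating onto a true change point. Summing across regions then gives $\tfrac{1}{n}\mbox{RSS}(\hat{\mathcal{R}}) \to \sigma^2 + B^*$ for some $B^* > 0$, uniformly over the relevant configurations.

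Then I would compare the MDL criterion at $\hat{\mathcal{K}}$ with its value at the true change point vector $\mathcal{K}^0$. Since Lemma~\ref{breaksigma} applied at $\mathcal{R}^0$ gives $\tfrac{1}{n}\mbox{RSS}(\mathcal{R}^0) \to \sigma^2$, the dominant $\tfrac{n}{2}\log(\mbox{RSS}/n)$ term in~(\ref{MDLregression}) satisfies
\[
\tfrac{n}{2}\log\!\big(\tfrac{1}{n}\mbox{RSS}(\hat{\mathcal{R}})\big) - \tfrac{n}{2}\log\!\big(\tfrac{1}{n}\mbox{RSS}(\mathcal{R}^0)\big) \geq \tfrac{n}{2}\big[\log(\sigma^2+B^*) - \log(\sigma^2)\big] + o(n),
\]
which diverges to $+\infty$ at rate $n$. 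The penalty contributions $B\log_2 P$, $\log_2(l_b+1)$, $\sum_z\log_2 n_{z,b}$, and $\tfrac{s_r}{2}\log_2 n_r$ differ between the two configurations by only $O(\log n)$ and are therefore dominated. Hence $\mbox{MDL}_{\rm reg}(\hat{\mathcal{K}}) > \mbox{MDL}_{\rm reg}(\mathcal{K}^0)$ with probability tending to one, contradicting the minimization property of $\hat{\mathcal{K}}$ and forcing $P(\hat{\mathcal{K}} \in \mathcal{A}_V) \to 0$.

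The main obstacle will be making the convergence statement uniform over $\mathcal{A}_V$, since Lemma~\ref{breaksigma} is pointwise while $\mathcal{A}_V$ contains polynomially many distinct partitions (at most $n^{\sum_b L_b}$). I would address this by a union bound combined with exponential concentration for sums of squared Gaussian residuals, noting that under the normality assumption the squared residuals are sub-exponential, so the probability that $\tfrac{1}{\hat{n}_r}\sum \hat{\varepsilon}_i^2$ deviates from its limit by more than $B^*/2$ decays exponentially in $\hat{n}_r \gtrsim \delta n$. The $[\log n]^2$ separation is chosen so that summing these exponentially small probabilities over the polynomially many configurations in $\mathcal{A}_V$ still gives a vanishing bound, while the bias $B(\hat{\mathcal{R}}_r)$ remains uniformly bounded below by a positive constant depending only on $\delta$ and the true coefficient gaps $\|\beta^0_j - \beta^0_{j'}\|$ across adjacent regions.
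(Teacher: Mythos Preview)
Your proposal has a genuine gap at the key quantitative step. You assert that the bias $B^*$ is \emph{uniformly bounded away from zero} over $\mathcal{A}_V$, claiming the $[\log n]^2$ separation ``forces non-degenerate intersections on both sides of the missed change point.'' This is false. Membership in $\mathcal{A}_V$ only guarantees that every estimated change point along dimension $b$ is at least $[\log n]^2$ away from $k^0_{v_b,b}$; in particular, a configuration with an estimated change point at $k^0_{v_b,b}+[\log n]^2$ is in $\mathcal{A}_V$. The straddling region $\hat{\mathcal{R}}_r$ then overlaps the ``wrong'' true region in only $[\log n]^2$ observations, so its relative proportion is $[\log n]^2/n\to 0$. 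Lemma~\ref{breaksigma} is stated for fixed relative locations, and its bias term $B(\hat{\mathcal{R}}_r)$ vanishes continuously as the smaller overlap proportion tends to zero. Hence $\tfrac1n\mbox{RSS}(\hat{\mathcal{R}})-\sigma^2$ is in general only of order $[\log n]^2/n$, not bounded below by a constant, and your order-$n$ separation in the log-RSS term collapses. No amount of exponential concentration or union bounding repairs this, because the problem is the size of the signal, not the fluctuations.

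The paper's proof handles exactly this difficulty by a finer decomposition in the spirit of \citet{yao88}. It refines the candidate partition by adding the true change points together with the endpoints $k^0_{v_b,b}\pm[\log n]^2$, isolating a window $\bar{\mathcal{R}}_1$ of size $\bar n\propto[\log n]^2$ around the missed change point. On $\bar{\mathcal{R}}_1$ the bias contribution to the RSS is of exact order $\bar n\asymp[\log n]^2$, while on the complementary regions the difference between the true errors and fitted residuals is shown to be only $O_p(\log n)$. The contradiction is then obtained at the scale $[\log n]^2$ versus $\log n$, not at the scale $n$ versus $\log n$ as you propose. You would need to replace your appeal to Lemma~\ref{breaksigma} with this localized refinement to make the argument go through.
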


\begin{proof}
Let $\mathcal{K} \in \mathcal{A}_{V}$, and $\mathcal{R}$ be the set of partitions induced by $\mathcal{K}$. Define $\tilde{\mathcal{R}}$ be the set of partitions induced by the change points $\tilde{\mathcal{K}}_b = \{k_{1,b}, \dots, k_{l_b,b}, k^0_{1,b}, \dots, k^0_{v_b-1,b}, k^0_{v_b,b}-[\log n]^2, k^0_{v_b,b}+[\log n]^2, k^0_{v_b+1,b}, \dots, k^0_{l^0_b,b}\}$ for all $b \in \mathcal{B}$. It is obvious that $RSS(\mathcal{R}) \geq RSS(\mathcal{\tilde{R}})$. Following the construction in \citet{yao88}, $\tilde{\mathcal{K}}_b$ can be decomposed the following way. For each $b \in \mathcal{B}$, define $U_{s,b}$ for $s = 1, \dots, v_b-1, v_b+2, \dots, l_b^0+1$ be the set of points $i \in (k^0_{s-1,b}, k^0_{s,b}]$, $U_{v_b,b}$ the set of points $i \in (k^0_{v_b-1,b}, k^0_{v_b} - [\log n]^2 ]$, $U_{v_b+1,b}$ with the points $i \in (k^0_{v_b,b} + [\log n]^2, k^0_{v_b + 1,b}]$, and $U_{l^0_b+2, b}$ the set of points $i \in (k^0_{v_b,b} - [\log n]^2, k^0_{v_b,b} + [\log n]^2]$. Denote $\mathcal{U}_b = \{U_{1,b}, \dots, U_{l^0_b+2,b}\}$, and $\bar{\mathcal{R}} = \bar{\mathcal{R}}_1 \cup \bar{\mathcal{R}}_2$ be the set of partitions induced by $\mathcal{U}_1, \dots, \mathcal{U}_B$, where $\bar{\mathcal{R}}_1$ contains the partition with all boundaries formed by $U_{l^0_b+2,b}$ $\forall$ $b \in \mathcal{B}$, and $\bar{\mathcal{R}}_2$ contains all other partitions. Following \citet{yao88} and \citet{AueLee11}, 
\[
0 \leq \sum_{\bar{\mathcal{R}}_{v,2} \in \bar{\mathcal{R}}_2}\sum_{i \in \bar{\mathcal{R}}_{v,2}} \varepsilon^2_i - \sum_{\bar{\mathcal{R}}_{v,2} \in \bar{\mathcal{R}}_2}\sum_{i \in \bar{\mathcal{R}}_{v,2}}(y_i - X_i'\hat{\beta}(\bar{\mathcal{R}}_{v,2}))^2 = O_p(\log n).
\]
For $\bar{\mathcal{R}}_1$, first note that $\bar{n} = |\bar{\mathcal{R}}_1| \propto [\log n]^2$. Let $\mathcal{I} = \{r: \mathcal{R}^0_r \cap \bar{\mathcal{R}}_1 \neq \emptyset\}$, and for any $r = 1, \dots, R^0$, use the relationship $y_i - X_i'\hat{\beta}(\bar{\mathcal{R}}_1) = X_i'\beta_r + \varepsilon_i - X_i'\hat{\beta}(\bar{\mathcal{R}}_1)$ and obtain

\begin{align}
\frac{1}{\bar{n}}\big(\sum_{i \in \bar{\mathcal{R}}_1}\varepsilon^2_i - \sum_{i \in \bar{\mathcal{R}}_1} \hat{\varepsilon}^2_i \big) &= \frac{1}{\bar{n}}\big(\sum_{r \in \mathcal{I}}\sum_{i \in \mathcal{R}^0_r \cap \bar{\mathcal{R}}_1}\varepsilon^2_i - \sum_{r \in \mathcal{I}}\sum_{i \in \mathcal{R}^0_r \cap \bar{\mathcal{R}}_1} (y_i - X_i'\hat{\beta}(\bar{\mathcal{R}}_1))^2\big) \nonumber \\
&\approx-\frac{1}{\bar{n}} \sum_{r \in \mathcal{I}}\sum_{i \in \mathcal{R}^0_r \cap \bar{\mathcal{R}}_1} (X_i'\beta_v - X_i\hat{\beta}(\bar{\mathcal{R}}_1))^2 \nonumber \\
&= -\frac{1}{\bar{n}} \sum_{r \in \mathcal{I}}\sum_{i \in \mathcal{R}^0_r \cap \bar{\mathcal{R}}_1} (X_i'd_r)^2 \rightarrow B < 0 \nonumber 
\end{align} 

as $n \rightarrow \infty$, with $\hat{\beta}(\bar{\mathcal{R}}_1)$ be the estimate of $\beta$ using the observations in $\bar{\mathcal{R}}_1$. Thus, together gives 
\[
\frac{1}{\bar{n}} (RSS - RSS(\tilde{R})) < 0,
\]
which leads to 
\[
\min_{\mathcal{R}} RSS(\mathcal{R}) \geq RSS(\tilde{R}) > RSS \geq RSS(\hat{\mathcal{R}})
\]
with probability 1, which is a contradiction (note $RSS = \sum \limits_{i = 1}^n \varepsilon_i^2$).
\end{proof}

\begin{lemma}
\label{mdls}
Suppose $l_b \geq l^0_b$ $\forall$ $b\in \mathcal{B}$, then the following statements hold with probability approaching 1 as $n \rightarrow \infty$:
\begin{enumerate}[i)]
\item $\sum \limits^{l_b+1}_{z = 1}\log_2n_{z,b} - \sum \limits^{l^0_b+1}_{z = 1}\log_2n^0_{z,b} \geq 0$
\item $\sum \limits_{\mathcal{R}_r \in \mathcal{R}} \log_2R - \sum \limits_{\mathcal{R}^0_r \in \mathcal{R}^0} \log_2R^0 \geq 0$
\item $\sum \limits_{\mathcal{R}_r \in \mathcal{R}} \frac{s_r}{2}\log_2n_r - \sum \limits_{\mathcal{R}^0_r \in \mathcal{R}^0} \frac{s_r^0}{2}\log_2n_r^0 \geq 0$.
\end{enumerate}
\end{lemma}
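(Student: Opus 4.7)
My plan is to dispatch the three assertions separately using the minimum span constraint from~(\ref{breakcandidate}) together with the hypothesis $l_b \geq l_b^0$. The arguments for parts (i) and (ii) are essentially deterministic once one fixes a partition, and the probabilistic qualifier ``with probability approaching one'' enters only through part (iii) via the MDL-driven intra-region variable selection. Part (ii) is immediate: $R = \prod_{b\in\mathcal{B}}(l_b+1) \geq \prod_{b\in\mathcal{B}}(l_b^0+1) = R^0 \geq 1$ by hypothesis, and $\sum_{\mathcal{R}_r\in\mathcal{R}}\log_2 R = R\log_2 R$ (with the analogous identity for the truth), so the claim follows from monotonicity of $x\mapsto x\log_2 x$ on $[1,\infty)$.

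For part (i), I would write $\log_2 n_{z,b} = \log_2 n + \log_2(n_{z,b}/n)$ and use the lower bound $n_{z,b}/n \geq \delta$ from~(\ref{breakcandidate}) to obtain $\sum_{z=1}^{l_b+1}\log_2 n_{z,b} \geq (l_b+1)\log_2 n + (l_b+1)\log_2 \delta$. For the truth, $n_{z,b}^0 = (\lambda_{z,b}^0 - \lambda_{z-1,b}^0)n$ yields the exact expression $(l_b^0+1)\log_2 n + C_b^0$, where $C_b^0$ is a finite constant depending only on the true change proportions. Subtracting leaves $(l_b-l_b^0)\log_2 n + O(1)$, which is nonnegative for $n$ large whenever $l_b > l_b^0$. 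The boundary case $l_b = l_b^0$ for every $b$ need not be recovered from (i) in isolation, since under that case Theorem~\ref{breaklocthm} already delivers consistency of the change locations.

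Part (iii) follows in the same spirit: the coordinatewise minimum span forces $n_r \geq cn$ for some universal $c>0$, so $\log_2 n_r = \log_2 n + O(1)$ uniformly, and both sides of the claimed inequality take the form $(\tfrac{1}{2}\sum_r s_r)\log_2 n + O(R)$. Since $R \geq R^0$, it then suffices to establish $\sum_r s_r \geq \sum_{r^0} s_{r^0}^0$ asymptotically, which I would do by arguing that the MDL-optimal variable selection in any refinement of a true region retains every truly relevant predictor with probability approaching one: dropping a relevant predictor would inflate the within-region residual sum of squares by $\Theta(n)$, overwhelming any $O(\log n)$ saving in the penalty. The main obstacle will be precisely this probabilistic portion of part (iii); converting the heuristic ``relevant predictors are not dropped'' into an actual high-probability statement requires a concentration bound on the within-region residual sum of squares, exploiting the normality of the $\varepsilon_i$. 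Parts (i) and (ii) are, by contrast, essentially combinatorial consequences of the $\log n$ blow-up enforced by the minimum span constraint, and the three inequalities then combine additively to provide the penalty-side lower bound needed to finish Theorem~\ref{breakconsistthm}.
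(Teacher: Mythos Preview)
Your treatment of part~(ii) matches the paper's. The substantive differences lie in parts~(i) and~(iii).

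For part~(i), the paper does \emph{not} argue deterministically from the minimum span alone. It invokes Lemma~\ref{split}, which shows that with probability tending to one every estimated segment $P_{z,b}$ is nested inside some true segment $P^0_{z',b}$; writing $n_{z,b}=\gamma_{z,z',b}\,n^0_{z',b}$ and collecting terms, the ratio $\prod_z n_{z,b}\big/\prod_{z'} n^0_{z',b}$ is bounded below by $(\min_{z'} n^0_{z',b})^{\,l_b-l_b^0}\prod\gamma$, which diverges (or equals one) since $\min_{z'} n^0_{z',b}\asymp n$ and $\prod\gamma$ stays bounded away from zero. Thus in the paper the probabilistic qualifier already enters at~(i), not only at~(iii) as you assert. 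Your alternative route via $n_{z,b}/n\geq\delta$ yields the lower bound $(l_b-l_b^0)\log_2 n + (l_b+1)\log_2\delta - \sum_z\log_2(\lambda^0_{z,b}-\lambda^0_{z-1,b})$, and when $l_b=l_b^0$ this constant is strictly negative because $\delta$ is by assumption smaller than every true spacing. Your dismissal covers only the case $l_b=l_b^0$ \emph{for every} $b$, but the lemma is stated per coordinate, so you still owe the case where $l_b=l_b^0$ for one $b$ while $l_{b'}>l_{b'}^0$ for another. Summing your bound over $b$ does give what is actually needed for Theorem~\ref{breakconsistthm}, so your argument rescues the application at the cost of not proving the lemma as written; the paper's nesting argument, by contrast, gives the coordinatewise statement directly.

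For part~(iii), you are attempting considerably more than the paper does. The paper simply assumes $s_r=s^0_{r'}=s$ for all regions, reduces to the mechanism of part~(i), and explicitly defers the case $s_r\neq s^0_{r'}$ to future work. Your sketch --- that the MDL-selected model in any sub-region retains every truly relevant predictor because dropping one costs $\Theta(n)$ in RSS against an $O(\log n)$ penalty saving --- is a reasonable heuristic but is not carried out in the paper, so the concentration argument you flag as the ``main obstacle'' is indeed an addition rather than a reproduction.
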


\begin{proof}
\hfill
\vspace{-0.2in}
\begin{enumerate}[i)]
\item Denote $\{P^0_{z',b}\}_{z' = 1}^{l^0_b+1}$ the pieces at each dimension $b \in \mathcal{B}$ and the estimated pieces $\{P_{z,b}\}_{z = 1}^{l_b+1}$. Lemma \ref{split} implies that for each $P_{z,b}$, there exists a $P^0_{z',b}$ such that $P_{z,b} \subseteq P^0_{z',b}$ as $n \rightarrow \infty$. Let $n_{z,b} = \alpha_{z,b}n$ for $z = 1, \dots, l_b+1$ and $n_{z,b}^0 = \alpha_{z,b}^0n$ for $z = 1, \dots, l^0_b+1$, then there exists $\gamma_{z, z', b}$ such that $\gamma_{z, z', b} \rightarrow \frac{\alpha_{z,b}}{\alpha^0_{z',b}}$. Denote $\mathcal{J}_{z'}^b = \{z:P_{z,b} \subseteq P^0_{z',b}, z = 1, \dots, l_b+1\}$ for $z' = 1, \dots, l^0_b$. Then,
\begin{align}
\Big[\prod^{l_b+1}_{z = 1}n_{z,b}\Big]\Big[\prod^{l_b^0+1}_{z' = 1}n^0_{z',b}\Big]^{-1} &= \Big[\prod^{l^0_b+1}_{z' = 1}\prod_{z \in \mathcal{J}_{z'}^b}\gamma_{z, z', b}(n^0_{z',b})^{|\mathcal{J}_{z'}^b|}\Big]\Big[\prod^{l^0_b+1}_{z=1}n^0_{z',b}\Big]^{-1} \nonumber \\
&= \Big[\prod^{l^0_b+1}_{z' = 1}\prod_{z \in \mathcal{J}_{z'}^b}\gamma_{z, z', b}(n^0_{z',b})^{|\mathcal{J}_{z'}^b|-1}\Big] \nonumber \\
&\geq  (\min n^0_{z',b})^{l_b - l_b^0}\prod^{l^0_b+1}_{z' = 1}\prod_{z \in \mathcal{J}_{z'}^b}\gamma_{z, z', b} \nonumber \\
& \geq  1 \nonumber
\end{align}
as $n \rightarrow \infty$ since $\prod^{l^0_b+1}_{z' = 1}\prod_{z \in \mathcal{J}_{z'}^b}\gamma_{z, z', b}$ converges to a finite number lower bounded away from 0. Taking $\log$ gives desired result.
\item If $l_b > l_b^0$ for some $b \in \mathcal{B}$, then $R > R^0$. The claim then follows.
\item Suppose for now assume that $s_r = s^0_{r'} = s$. Then the claim follows from the proof of part i). The proof of $s_r \neq s^0_{r'}$ is not of focus in this context, and will be proved in future work. \qedhere
\end{enumerate}
\end{proof}

\begin{lemma}
\label{bound}
Suppose $l_b \geq l_b^0$ $\forall$ $b \in \mathcal{B}$ and strict inequality for some $b$. Then,
\[
P(0 \leq RSS - RSS(\hat{\mathcal{R}}) < Q(\hat{\mathcal{L}},\epsilon)) \rightarrow 1,
\]
where $Q(\hat{\mathcal{L}}, \epsilon) = \sigma^2\log n \{\epsilon + 2[\prod \limits_{b \in \mathcal{B}}(\hat{l}_b - l_b^0)-1](1+\epsilon)\}$.
\end{lemma}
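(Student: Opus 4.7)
The plan is to bound $RSS - RSS(\hat{\mathcal{R}})$ by working on the high-probability event of Lemma~\ref{split} and then reducing the quantity to a sum of chi-square random variables that can be controlled uniformly over the random partition. First, let $E_n$ be the event that, for every true change location $k^0_{v,b}$, at least one estimated change point lies within $[\log n]^2$ of $k^0_{v,b}$; Lemma~\ref{split} yields $P(E_n) \to 1$. On $E_n$, form the common refinement $\mathcal{R}^\star = \hat{\mathcal{R}} \vee \mathcal{R}^0$, which refines both partitions and differs from $\hat{\mathcal{R}}$ only on boundary strips of aggregate length $O([\log n]^2)$. This reduces the problem to bounding $RSS - RSS(\mathcal{R}^\star)$ up to a negligible boundary correction, which in particular justifies the nonnegative lower bound in the statement.

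Since $\mathcal{R}^\star$ refines $\mathcal{R}^0$, the true regression function is a single linear model within each cell of $\mathcal{R}^\star$, and the standard least-squares identity gives the decomposition
\[
RSS - RSS(\mathcal{R}^\star) = \sum_{r} \bfe_r' H_r \bfe_r,
\]
where $H_r = X_r(X_r'X_r)^{-1} X_r'$ is the hat matrix in the $r$th cell of $\mathcal{R}^\star$ and $\bfe_r$ collects the Gaussian noise there. Each summand is exactly $\sigma^2 \chi^2_{s_r}$. The number of cells of $\mathcal{R}^\star$ that go beyond $\mathcal{R}^0$ is controlled by the excesses $\hat{l}_b - l_b^0$ along each coordinate and will produce the $\prod_{b \in \mathcal{B}}(\hat{l}_b - l_b^0)$ factor in $Q$; the remaining cells contribute the $\epsilon \sigma^2 \log n$ slack coming from the tolerance in the concentration step.

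To upgrade this into a uniform bound over the random $\hat{\mathcal{R}}$, the spacing constraint $\delta \ll \min(\lambda_{j_b,b} - \lambda_{j_b-1,b})$ from~(\ref{breakcandidate}) is crucial: it restricts admissible partitions to at most polynomially many in $n$. Combining a Laurent--Massart chi-square tail inequality with a union bound over all admissible refinements then yields $\sup \bfe_r' H_r \bfe_r \le (1+\epsilon)\sigma^2 s_r \log n$ simultaneously across cells with probability tending to $1$, and summing delivers the advertised bound $Q(\hat{\mathcal{L}},\epsilon)$. The main obstacle is precisely this uniform concentration step over a growing random family of partitions; the argument follows the lines of Yao (1988) and Aue--Lee (2011) already invoked in Lemma~\ref{split}, but the bookkeeping needed to match the exact constant $2[\prod_{b}(\hat{l}_b - l_b^0) - 1](1+\epsilon)$ is delicate and relies both on the Gaussianity of $\bfe$ and on the minimum-spacing condition to keep the union-bound penalty at the $\log n$ rate.
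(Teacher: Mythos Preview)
Your overall strategy---restrict to the event from Lemma~\ref{split}, pass to a refinement of $\hat{\mathcal{R}}$ whose cells each lie inside a single true region, write the RSS gap as $\sum_r \bfe_r'H_r\bfe_r$, and control these uniformly over admissible partitions---is the same skeleton the paper uses. The gap is in the refinement you choose and in the concentration tool, and both matter for the constant $Q(\hat{\mathcal{L}},\epsilon)$.

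The paper does not take the common refinement $\hat{\mathcal{R}}\vee\mathcal{R}^0$; it refines by the true breaks \emph{and} the buffer points $k^0_{j,b}\pm[\log n]^2$. On the good event each true break absorbs one nearby estimated break into its buffer, so only $\hat l_b - l_b^0$ estimated breaks per coordinate survive in the ``core'' intervals $(k^0_{s-1,b}+[\log n]^2,\,k^0_{s,b}-[\log n]^2]$; this is where the factor $\prod_b(\hat l_b - l_b^0)$ actually comes from. Your refinement $\hat{\mathcal{R}}\vee\mathcal{R}^0$ has $\prod_b(\hat l_b + 1)$ large cells after discarding the thin strips, so your cell count does not deliver the stated product. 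More critically, the paper applies two different rates: for buffer cells (containing $O([\log n]^2)$ points) it invokes Lemma~1 of \citet{yao88}, a law-of-the-iterated-logarithm result giving $O_p(\log\log n)$ per cell, and only for the core cells does it use the $2\sigma^2(1+\epsilon)\log n$ rate from Lemma~5 of \citet{yao88}. Your Laurent--Massart plus union-bound route charges $O(\log n)$ indiscriminately to every cell, including the $\prod_b(l_b^0+1)$ baseline cells; the resulting bound is strictly larger than $Q(\hat{\mathcal{L}},\epsilon)$ and would not close the comparison with the penalty in Lemma~\ref{order}. The buffer construction together with the $\log\log n$ versus $\log n$ separation is the missing device.
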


\begin{proof}
Following the notations in the proof of Lemma \ref{split}, let $\mathcal{W}(n)$ be the intersection of all the complements of $A_V(n)$. Due to Lemma \ref{split}, it is enough to proof the claim for any $\mathcal{R}_w$ induced by the change locations $w \in \mathcal{W}(n)$. For any $w$, $RSS(\mathcal{R}_w) \geq RSS(\tilde{\mathcal{R}}_w)$, where $\tilde{\mathcal{R}}_w$ is induced by the change locations $\mathcal{M}_{w,b} = \{k_{1,b}, \dots, k_{l_b,b}, k^0_{1,b}, \dots, k^0_{l^0_b,b}, k^0_{1,b}-[\log n]^2, \dots, k^0_{l^0_b,b}-[\log n]^2, k^0_{1,b}+[\log n]^2, \dots, k^0_{l^0_b,b}+[\log n]^2\}$ for each $b \in \mathcal{B}$. Following \citet{yao88}, define $V_{s,1,b}$ be the set of points $i \in (k^0_{s,b} - [\log n]^2, k^0_{s,b}]$ for $s = 1, \dots, l^0_b$, $V_{s,2,b}$ the set of points $i \in (k^0_{s,b}, k^0_{s,b} + [\log n]^2]$ for $s = 1, \dots, l^0_b$, $V_{1,3,b}$ the set of points $i \in (0, k^0_{1,b} - [\log n]^2]$, $V_{l^0_b+1,3,b}$ the set of points $i \in (k^0_{l^0_b,b} + [\log n]^2, n]$, and $V_{s,3,b}$ the set of points $i \in (k^0_{s-1,b} + [\log n]^2, k^0_{s,b} - [\log n]^2]$. Then $RSS(\tilde{R}_w)$ can be decomposed into the sum of RSS induced by these new regions, denote $\mathcal{R}^{\prime}$. Note that $\mathcal{R}^{\prime}$ contains three types of regions: i) none of the boundaries are induced by points from $V_{\cdot,3,b}$, ii) some boundaries of $\mathcal{R}^{\prime}$ are induced by points from $V_{\cdot, 3,b}$, and iii) the all boundaries of $\mathcal{R}^{\prime}$ are induced by points from $V_{\cdot,3,b}$. For case i, an application of Lemma 1 from \citet{yao88} gives 
\[
\sum_{i \in \mathcal{R}^{\prime}_r}\varepsilon^2_i - \sum_{i \in \mathcal{R}^{\prime}_{r,i}}\hat{\varepsilon}^2_i = O_p(\log \log n).
\]
For case ii, note that the number of data points in each of those regions are still bounded by $\log^2n$, thus the result
\[
\sum_{i \in \mathcal{R}^{\prime}_r}\varepsilon^2_i - \sum_{i \in \mathcal{R}^{\prime}_{r,ii}}\hat{\varepsilon}^2_i = O_p(\log \log n)
\]
still holds. For case iii, note that the number of these regions $\mathcal{R}^{\prime}_{r,iii}$ that contains some $k_{\cdot,b}$ on at least one boundary is bounded by $\prod\limits_b (l_b - l_b^0)$. Thus as in Lemma 5 of \citet{yao88},
\[
\sum_{i = 1}^n \epsilon^2_i \geq RSS(\mathcal{R}_w) \geq \sum_{i=1}^n \varepsilon^2_i - Q(w, \epsilon),
\]
and this completes the proof.
\end{proof}

\begin{lemma}
\label{order}
Suppose $\hat{l}_b \geq l_b$ $\forall$ $b \in \mathcal{B}$, and let $\hat{\mathcal{R}}$ be the partition induced by the changes, then 
\[
P\bigg(\frac{n}{2}\Big[\log\Big(\frac{RSS(\hat{\mathcal{R}})}{n}\Big) - \log\Big(\frac{RSS(\mathcal{R}^0)}{n}\Big) \Big] \geq 0 \bigg ) \rightarrow 1
\] 
as $n \rightarrow \infty$.
\end{lemma}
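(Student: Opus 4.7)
The plan is to use Lemma~\ref{bound} together with a parallel bound for the true partition to control the difference $RSS(\hat{\mathcal{R}})-RSS(\mathcal{R}^0)$, and then to Taylor-expand the logarithm around the common a.s.\ limit $\sigma^2$. The first step is to invoke Lemma~\ref{bound}, which gives $0 \le RSS - RSS(\hat{\mathcal{R}}) \le Q(\hat{\mathcal{L}},\epsilon) = O(\log n)$ with probability approaching one, where $RSS = \sum_{i=1}^n \varepsilon_i^2$. Specializing that same decomposition (or using standard OLS chi-square bookkeeping within each of the $R^0$ fixed true regions) yields $0 \le RSS - RSS(\mathcal{R}^0) = O_p(1)$. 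Combining these, $|RSS(\hat{\mathcal{R}})-RSS(\mathcal{R}^0)| = O_p(\log n) = o_p(n)$, and in particular both $RSS(\hat{\mathcal{R}})/n$ and $RSS(\mathcal{R}^0)/n$ converge a.s.\ to $\sigma^2$.

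Next, set $u = [RSS(\hat{\mathcal{R}})-RSS(\mathcal{R}^0)]/RSS(\mathcal{R}^0)$, which is $O_p(\log n/n) = o_p(1)$. Using $\log(1+u)=u+O(u^2)$ together with $RSS(\mathcal{R}^0) = n\sigma^2(1+o_p(1))$,
\[
\frac{n}{2}\!\left[\log\frac{RSS(\hat{\mathcal{R}})}{n} - \log\frac{RSS(\mathcal{R}^0)}{n}\right]
= \frac{RSS(\hat{\mathcal{R}})-RSS(\mathcal{R}^0)}{2\sigma^2} + o_p(1),
\]
so the claim reduces to showing that $RSS(\hat{\mathcal{R}}) \ge RSS(\mathcal{R}^0) + o_p(1)$ on an event of probability approaching one.

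The hard part is precisely this sign control, and I expect it to be the main obstacle. By Lemma~\ref{split}, every estimated change point is confined to a $[\log n]^2$-neighborhood of some true change point, so $\hat{\mathcal{R}}$ agrees with a refinement of $\mathcal{R}^0$ outside slim ``straddling strips'' of total length $O([\log n]^2)$ around each true breakpoint. Inside such a strip a single linear model is forced across two genuinely different $\beta_r^0$'s, contributing a strictly positive mis-specification term to $RSS(\hat{\mathcal{R}})$; outside the strips, $\hat{\mathcal{R}}$ is a clean refinement of $\mathcal{R}^0$ and contributes only a non-positive over-fitting gain that is bounded above by $O_p(\log n)$ via the same chi-square bookkeeping used in Lemma~\ref{bound} and Lemma~1 of \citet{yao88}. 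The proof is completed by matching a sharp lower bound on the straddling contribution (driven by the separation $\lVert\beta_r^0-\beta_{r-1}^0\rVert > 0$ of adjacent true coefficient vectors, scaling at least linearly with the strip length) against the upper bound on the over-fitting gain, and showing the former dominates in the limit. This trade-off is the technical heart of the argument and is where the $\delta$-separation built into $\mathbf{\Lambda}$ and the boundedness assumption $|\mathbf{x}_i|_\infty \le M_0$ are both used.
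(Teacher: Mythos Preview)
Your route diverges from the paper's in a way that creates a real gap. The paper never tries to compare $RSS(\hat{\mathcal{R}})$ directly against $RSS(\mathcal{R}^0)$; instead it uses the trivial inequality $RSS \geq RSS(\mathcal{R}^0)$ (with $RSS=\sum_i\varepsilon_i^2$) to replace $RSS(\mathcal{R}^0)$ by $RSS$ in the second logarithm, writes the difference as $\tfrac{n}{2}\log\bigl(1-(RSS-RSS(\hat{\mathcal{R}}))/RSS\bigr)$, and then invokes Lemma~\ref{bound} together with $RSS>n(\sigma^2-\epsilon)$ and $\log(1-x)>-x(1+\epsilon)$. What this actually yields is a lower bound of the form $-C\log n$ with an explicit constant $C$ depending on $\hat{\mathcal L}$; that is the quantity which is then balanced against the penalty growth from Lemma~\ref{mdls} to finish Theorem~\ref{breakconsistthm}. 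The ``hard part'' you single out---showing straddling bias beats over-fitting gain so that $RSS(\hat{\mathcal{R}})\ge RSS(\mathcal{R}^0)+o_p(1)$---is not needed at all.

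More importantly, that ``hard part'' cannot be carried out as you describe. Nothing prevents the MDL minimizer over the larger model class from placing its $l_b^0$ ``necessary'' cuts \emph{exactly} at the true locations and its extra cuts elsewhere; in that event $\hat{\mathcal{R}}$ is a genuine refinement of $\mathcal{R}^0$, there is no straddling at all, and $RSS(\hat{\mathcal{R}})-RSS(\mathcal{R}^0)$ equals minus a sum of positive $\chi^2$-type over-fitting terms, i.e.\ is strictly negative with probability bounded away from zero. Even when the estimated cuts merely land very close to the true ones, the straddling contribution is $O_p(\text{strip width})$ with no lower bound guaranteed by Lemma~\ref{split}, while the over-fitting gain is $O_p(1)$ and does not vanish. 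So the trade-off you sketch does not go the right way, and your argument stalls precisely at the step you flag as the main obstacle. The fix is the paper's: compare to $RSS$ rather than to $RSS(\mathcal{R}^0)$, accept a $-O(\log n)$ lower bound for the log-RSS difference, and let the penalty terms in Lemma~\ref{mdls} absorb it.
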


\begin{proof}
Note that as $n \rightarrow \infty$, $RSS = \sum \limits_{i=1}^n \varepsilon^2 > n(\sigma^2 - \epsilon)$ for $\epsilon > 0$. Also, note that $RSS \geq RSS(\mathcal{R}^0)$. Therefore,
\begin{align}
\label{RSS}
\frac{n}{2}\Big[\log\Big(\frac{RSS(\hat{\mathcal{R}})}{n}\Big) - \log\Big(\frac{RSS(\mathcal{R}^0)}{n}\Big) \Big] &\geq  \frac{n}{2}\Big[\log\Big(\frac{RSS(\hat{\mathcal{R}})}{n}\Big) - \log\Big(\frac{RSS}{n}\Big) \Big] \nonumber \\
&= \frac{n}{2} \log\Big(1 - \frac{RSS - RSS(\hat{\mathcal{R}})}{RSS}\Big) \nonumber \\
&\geq  \frac{n}{2}\log \Big(1 - \frac{Q(\hat{\mathcal{R}},\epsilon)}{n(\sigma^2 - \epsilon)}\Big). 
\end{align}
Using the inequality $\log(1-x) > -x(1+\epsilon)$ for small $x > 0$, the RHS of (\ref{RSS}) is greater than 
\[
-\frac{(1+\epsilon)}{2(\sigma^2 - \epsilon)}\sigma^2\log n\{\epsilon + 2[\prod \limits_{b \in \mathcal{B}}(\hat{l}_b - l_b^0) - 1](1 + \epsilon)\}
\]
with probability 1 for positive for small $\epsilon$. This completes the proof of the claim. 
\end{proof}
Combining with Lemma \ref{mdls}, this completes the proof that $\hat{\mathcal{L}} \xrightarrow{P} \mathcal{L}^0$. The second part follows from the fact that $P(\hat{\mathcal{R}}) \geq P(\hat{\mathcal{R}}, \hat{\mathcal{L}} = \mathcal{L}^0) \rightarrow 1$. 

\section*{Proof of Corollary 1}
\begin{proof}
The proof for the probit model follows by rewriting (\ref{glm}) as a latent variable model, and noticing the one-to-one relationship between the latent variable and the observed value. In particular, for each observation $\textbf{x}_i$, suppose there is a latent variable $y_i^*$ such that
\[
y_i^* = \textbf{x}_i\beta + \varepsilon_i,
\]
where $\varepsilon_i \sim N(0,1)$. Then the observations $y_i$ can be determined as 
\[
y_i = \left\{
\begin{array}{cl}
1, & y_i^* > 0 \nonumber \\
0, & otherwise. \nonumber
\end{array}
\right.
\qedhere \]

\end{proof}

\bibliographystyle{agsm}
\bibliography{tleeref}
\end{document}